\newtheorem{definition}{Definition}
\newtheorem{lemma}{Lemma}
\newtheorem{theorem}{Theorem}
\newcommand{\version}{short}
\newcommand{\cond}[2]{\ifthenelse{\equal{\version}{#1}}{#2}{}}
\begin{document}
\title{Age- and Deviation-of-Information of Time-Triggered and Event-Triggered Systems}
\author{\IEEEauthorblockN{Mahsa Noroozi, Markus Fidler} 
\IEEEauthorblockA{Institute of Communications Technology\\ Leibniz Universit\"{a}t Hannover}
}
\maketitle
\thispagestyle{plain} 
\pagestyle{plain}
\begin{abstract}
Age-of-information is a metric that quantifies the freshness of information obtained by sampling a remote sensor. In signal-agnostic sampling, sensor updates are triggered at certain times without being conditioned on the actual sensor signal. Optimal update policies have been researched and it is accepted that periodic updates achieve smaller age-of-information than random updates. We contribute a study of a signal-aware policy, where updates are triggered by a random sensor event. By definition, this implies random updates and as a consequence inferior age-of-information. Considering a notion of deviation-of-information as a signal-aware metric, our results show, however, that event-triggered systems can perform equally well as time-triggered systems while causing smaller mean network utilization.
\end{abstract}
%
%
\section{Introduction}
\label{sec:introduction}
We consider a system where a remote sensor is sampled and the samples are transmitted via a network to a monitor. A model of the system is shown in Fig.~\ref{fig:system}. The signal $C(t)$ generated by the sensor changes randomly over time $t$ and the $n$th sample is taken and sent to the network at time $A(n)$. We investigate two different sampling policies. In a time-triggered system, the sampling process is agnostic to the signal and samples are taken after a certain amount of time has elapsed. In an event-triggered system, the sampler is signal-aware and whenever the signal change with respect to the last sample exceeds a threshold, a new sample is generated. Sample $n$ has network service requirement $S_i(n)$ at queue $i$ and it departs from the network to the monitor at time $D(n)$. The monitor does not have a priori knowledge of the distribution and parameters of the sensor signal $C(t)$. Hence, it relies only on the most recent update received, i.e., at time $t$ sample $n^* = \max\{ n : D(n) < t \}$ provides the sensor reading $C(A(n^*))$ generated at time $A(n^*)$.

A key performance metric of such systems is the age-of-information (AoI) that quantifies the freshness of information at the monitor. The AoI is defined as $\Delta(t) = t - A(n^*)$. An example of the progression of the AoI over time is shown in Fig.~\ref{fig:aoisimple}~\cite{kaul:ageofinformationvehicular}. The information of sample $n$ generated at time $A(n)$ ages with slope one with $t$. The monitor selects the most recent sample $n^*$ that it has received. This leads to the linear increase of $\Delta(t)$ with discontinuities whenever a fresher sample becomes available at the monitor and the AoI is reset to the network delay.

The notion of AoI has been introduced in vehicular networks~\cite{kaul:ageofinformationvehicular, kaul:ageofinformationqueue, zinchenko:informationfreshness, tchouankem:messagelifetime}. It has emerged as a very active area of research, being of general importance for a variety of applications in the areas of cyber-physical systems and the Internet of Things. There, particular challenges arise in networked feedback control systems~\cite{champati:ageofinformationfeedbackcontrol, ayan:valueofinformation, klugel:aoipenalty}. Recent surveys are~\cite{yates:ageofinformationsurvey, kosta:ageofinformation}.

\begin{figure}
\begin{center}
\includegraphics[width=0.95\linewidth]{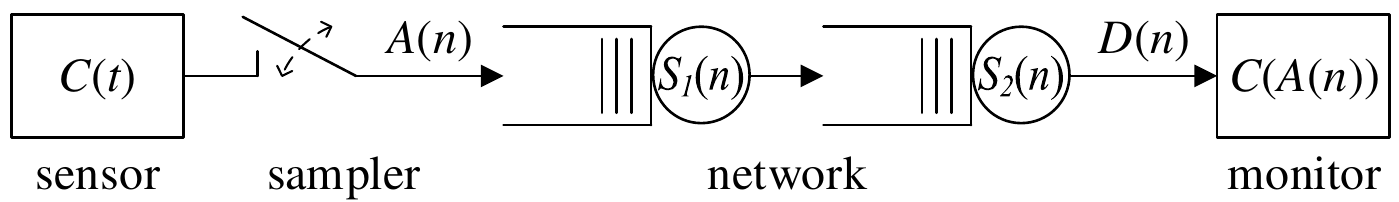}
\caption{System model. At time $A(n)$ the $n$th sample of the sensor signal $C(t)$ arrives at a network of queues, with service times $S_i(n)$. At time $D(n)$ the sample departs from the network to a monitor, conveying the signal $C(A(n))$.}
\label{fig:system}
\end{center}
\end{figure}

A general objective of AoI research is to find update policies that minimize the AoI. Common policies are periodic sampling, random sampling, and zero-wait sampling~\cite{kaul:ageofinformationqueue, yates:lazyistimely, shroff:updateorwait}. The effects of periodic and random sampling on the AoI have been studied in-depth using models of D$\mid$M$\mid$1 and M$\mid$M$\mid$1 queues and variants thereof~\cite{kaul:ageofinformationqueue, inoue:aoisingleserverqueues, champati:ageofinformationgigiqueue, modiano:informationfreshness}, and it is universally accepted that periodic sampling outperforms exponential, random sampling. Zero-wait sampling uses $A(n+1) = D(n)$ for all $n \ge 1$, i.e., reception of sample $n$ by the monitor triggers generation of sample $n+1$. This avoids queueing in the network entirely and achieves good but not necessarily optimal AoI~\cite{yates:lazyistimely, shroff:updateorwait}. Zero-wait sampling differs, however, from our system in Fig.~\ref{fig:system} as it requires feedback of network state information.

Different from these signal-agnostic policies, we consider a signal-aware policy~\cite{yates:ageofinformationsurvey, sun:remoteestimation, ornee:remoteestimation}, where samples are generated in case of a defined, random sensor event. At first sight, this brings about random updates, which may be assumed to have worse AoI performance than time-triggered, periodic updates. Noticing that AoI is a signal-agnostic metric, this may not be unexpected. We define a deviation-of-information (DoI) metric $\Phi(t) = C(t) - C(A(n^*))$ that matches the definition of AoI $\Delta(t) = t - A(n^*)$, but replaces age by the actual deviation of the monitor's signal estimate from the sensor signal $C(t)$.

We employ a max-plus queueing model and stochastic methods of the network calculus to derive bounds of tail delays~\cite{chang:performanceguarantees, ciucu:networkservicecurvescaling2, jiang:stochasticnetworkcalculus, fidler:netcalcguide}. We contribute solutions for AoI and DoI of time- and event-triggered systems. Simulation results that confirm the tail decay rates of our analytical bounds are included. Our results enable finding update rates that minimize the AoI or DoI, respectively. Interestingly, the optimal update rate may differ with respect to the goal of AoI or DoI minimization. While the event-triggered system has larger AoI, our evaluation shows that it requires a lower average update rate to achieve DoI performance similar to the time-triggered system.

The remainder of this work is structured as follows. In Sec.~\ref{sec:relatedwork} we give an overview of related works. Our basic model of a system that is triggered by sensor events is developed in Sec.~\ref{sec:sensormodel} where we also define suitable performance metrics. In Sec.~\ref{sec:aoi} we derive a lemma that is essential for our investigation of DoI. As an immediate corollary this lemma provides tail bounds of delay and AoI of time-triggered and event-triggered systems. We obtain our main result for the DoI in Sec.~\ref{sec:doi}. Brief conclusions are presented in Sec.~\ref{sec:conclusion}.

\begin{figure}
\begin{center}
\includegraphics[width=0.9\linewidth]{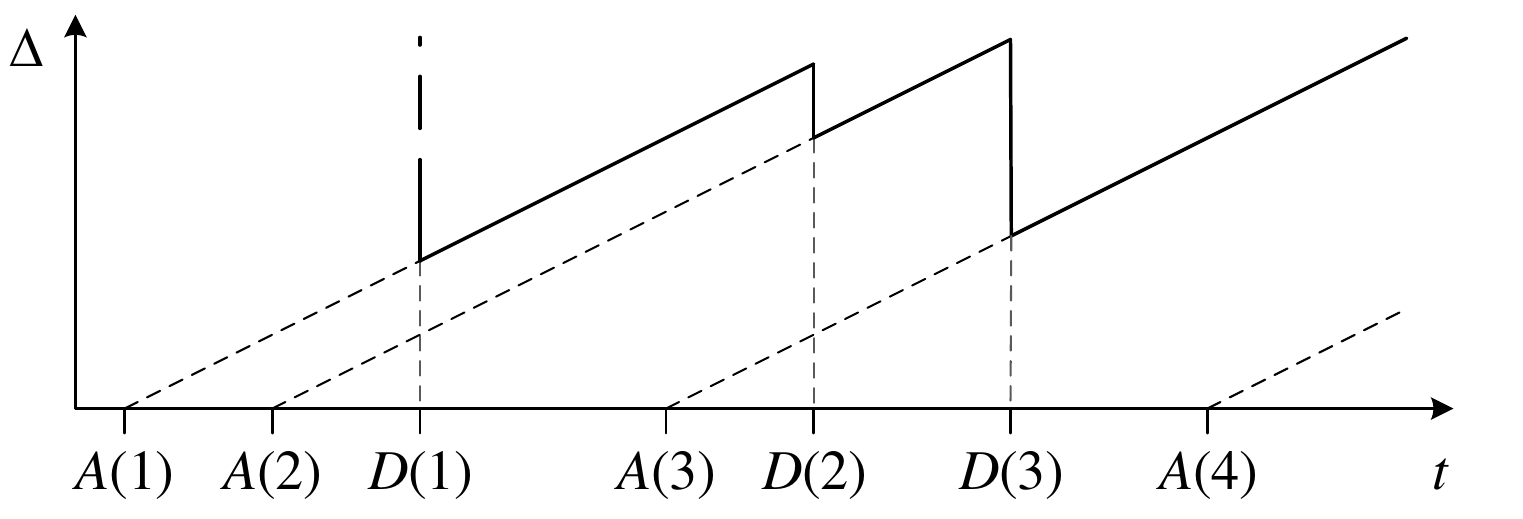}
\caption{Progression of the age-of-information $\Delta(t)$ over time $t$. $A(n)$ and $D(n)$ denote the network arrival and departure time stamps of sample $n$.}
\label{fig:aoisimple}
\end{center}
\end{figure}
\section{Related Work}
\label{sec:relatedwork}
The notion of AoI as a performance metric and its relevance to a wide range of systems have attracted significant research. During the past decade, AoI results of a catalogue of queueing systems have been accomplished~\cite{inoue:aoisingleserverqueues, yates:ageofinformationsurvey, kosta:ageofinformation}. Commonly, the time-average of the AoI, that can be visualized by the area under the curve in Fig.~\ref{fig:aoisimple}, is derived. Further, the peak AoI~\cite{modiano:ageofinformationqueueing, modiano:informationfreshness}, that is the maximal AoI observed immediately before an update is received, and the tail distribution of the AoI~\cite{pappas:ageofinformationnetworkcalculus, champati:ageofinformationmaxplus, champati:ageofinformationdgqueue, noroozi:minplusaoi, rizk:palmaoi} have been studied. In this work, we consider the peak AoI and like~\cite{pappas:ageofinformationnetworkcalculus, champati:ageofinformationmaxplus, noroozi:minplusaoi} we employ techniques from the stochastic network calculus~\cite{chang:performanceguarantees, ciucu:networkservicecurvescaling2, jiang:stochasticnetworkcalculus, fidler:netcalcguide} to estimate tail probabilities.

The starting basis of our work are a number of studies that compare the impact of periodic versus exponential sampling on the AoI. Optimal update rates that minimize the average AoI are considered in~\cite{kaul:ageofinformationqueue} for M$\mid$M$\mid$1, D$\mid$M$\mid$1, and M$\mid$D$\mid$1 queues. It is observed that the random arrivals of the M$\mid$M$\mid$1 queue lead to a 50\% increase of the AoI compared to the D$\mid$M$\mid$1 queue. For last-come first-served queues with and without preemption~\cite{inoue:aoisingleserverqueues} reports accordingly that the AoI of the D$\mid$M$\mid$1 queue outperforms the M$\mid$M$\mid$1 queue. The AoI of GI$\mid$GI$\mid$1$\mid$1 and GI$\mid$GI$\mid$1$\mid$2* queues is investigated in~\cite{champati:ageofinformationgigiqueue} and results are presented for deterministic arrivals and deterministic service, respectively. A comparison of periodic arrivals and Bernoulli arrivals in wireless networks~\cite{modiano:informationfreshness} shows that periodic arrivals outperform Bernoulli arrivals considering average AoI and peak AoI. These results indicate that random sampling may in general perform worse than periodic sampling. A plausible implication is that event-triggered systems may be inferior to time-triggered systems.

While the AoI of a sample increases linearly with time, the actual validity period of that sample depends on the future progression of the sensor signal. Taking this aspect into account appears essential for evaluation of event-triggered systems. A number of works employ a non-linear aging function to represent the value-of-information over time, see the survey~\cite{yates:ageofinformationsurvey}. The evolution of a random sensor signal can, however, not be modeled by a deterministic function.

Sampling governed by an external random process is considered in energy-harvesting systems, where random energy arrivals trigger sensor updates, see~\cite{yates:ageofinformationsurvey} for an overview. Different from these works, the event-triggered systems that we consider are signal-aware, i.e., the progression of the signal itself triggers sensor updates.

More closely related to our work are a number of studies on remote estimation of the state of a linear plant with Gaussian disturbance via a network~\cite{ayan:valueofinformation, champati:ageofinformationfeedbackcontrol, klugel:aoipenalty}. In~\cite{champati:ageofinformationfeedbackcontrol} geometric transmission times with success probability $p$ are assumed, whereas~\cite{klugel:aoipenalty} considers an erasure channel with loss probability $1-p$ and unit service time, and~\cite{ayan:valueofinformation} investigates scheduling for a cellular network. The common target is to minimize the mean-square norm of the state error at the monitor. It is shown that this can be expressed by a non-decreasing function of the AoI, referred to as age-penalty function in~\cite{klugel:aoipenalty} and expressed as value-of-information in~\cite{ayan:valueofinformation}. The result is an equivalent AoI minimization problem~\cite{champati:ageofinformationfeedbackcontrol, klugel:aoipenalty} that is signal-agnostic. AoI minimization is studied in~\cite{kaul:ageofinformationqueue, yates:lazyistimely, shroff:updateorwait}.

Remote estimation of Wiener processes using signal-aware sampling is analyzed in~\cite{sun:remoteestimation} and generalized to Ornstein-Uhlenbeck processes in~\cite{ornee:remoteestimation}. Samples are generated whenever the instantaneous estimation error exceeds a threshold. The policy is proven to minimize the time-average mean-square error of the estimate. For signal-agnostic sampling it is shown that the problem can be recast as AoI minimization. Generally, the policies that are investigated include an adapted zero-wait condition, where a new sample is generated only after the previous sample is delivered, i.e., $A(n+1) \ge D(n)$ for all $n \ge 1$. This avoids the problem of waiting times in network queues but requires feedback information that is not included in our system model, see Fig.~\ref{fig:system}.
%
%
\section{Sensor Model and Performance Metrics}
\label{sec:sensormodel}
We model the sensor signal as a random process and define the performance metrics peak AoI and DoI at the monitor.
%
%
\subsection{Sensor Model}
We consider a sensor that detects the occurrence of defined, random events indexed $n \in \mathbb{N}$ in order. Time $t \in \mathbb{R}_{0+}$ is continuous and non-negative. We denote $E(n)$ the time of occurrence of event $n \ge 1$, and define $E(0) = 0$. For all $n \ge 1$ it holds that $E(n) \ge E(n-1)$ and $I(n) = E(n) - E(n-1)$ are the inter-event times. The event count
\begin{equation}
C(t) = \max \{n \ge 0: E(n) \le t \},
\label{eq:eventcount}
\end{equation}
denotes the cumulative number of events that occurred in $(0,t]$. By definition $C(t) \in \mathbb{N}_0$, $C(0)=0$, and $C(t)$ is non-decreasing and right-continuous.

The sensor is part of the system model in Fig.~\ref{fig:system}. Depending on a defined trigger, time or event, the sensor is sampled and an update message that contains the current event count $C(t)$ is sent. The update messages are indexed $n \in \mathbb{N}$ and we denote $A(n)$ and $D(n)$ their arrival time to the network and departure time from the network, respectively. For convenience, we define $A(0) = D(0) = 0$, $A(\nu,n) = A(n) - A(\nu)$, and $D(\nu,n) = D(n) - D(\nu)$ for $n \ge \nu \ge 0$. Generally for all $n \ge 1$ it holds that $D(n) \ge A(n)$ for causality.

In a time-triggered system, update messages are sent by the sensor at times $A(n) = n w$ for $n \ge 1$ where $w \in \mathbb{R}_+$ is the width of the update interval. In an event-triggered system, update messages are sent whenever the number of events since the last update exceeds a threshold $\alpha \in \mathbb{N}$. This happens at times $A(n) = E(n\alpha)$ for $n \ge 1$. We assume that the monitor does not have any other, a priori knowledge of the random sensor process. In particular, it does not know the distribution nor any moments of the sensor process.

Practical examples of our system range from networked leak or overflow sensors, alert counters and alert aggregation in cloud and network operations, to people counting sensors, e.g., at emergency exits. More general sensor models may include processes $C(t)$ that are not non-decreasing. Examples include Gaussian noise and Wiener processes in~\cite{ayan:valueofinformation, champati:ageofinformationfeedbackcontrol, klugel:aoipenalty,sun:remoteestimation} or Markovian random walks. These may cause additional difficulties when defining a condition on the process $C(t)$ that triggers generation of update messages $A(n)$.
%
%
\subsection{Definition of Performance Metrics}
The network delay, respectively, the sojourn time of message $n \ge 1$ can be written as
\begin{equation}
T(n) = D(n) - A(n).
\label{eq:delaydef}
\end{equation}

A common definition of AoI at time $t > D(1)$ is $\Delta(t) = t - \max_{n \ge 1} \{A(n) : D(n) < t \}$. This definition matches~\cite{champati:ageofinformationmaxplus} with the minor difference that we define $\Delta(t)$ as a left-continuous function. Thus, the peak AoI of update $n \ge 1$ follows as
\begin{equation}
\Delta(n) = D(n+1) - A(n).
\label{eq:aoidef}
\end{equation}

Complementary to the AoI that is signal-agnostic, we define a signal-aware deviation-of-information (DoI) metric $\Phi(t) = C(t) - \max_{n \ge 1} \{C(A(n)): D(n) < t\}$ for $t > D(1)$. The DoI is the deviation of the current sensor signal from the latest value received by the monitor. The peak DoI of update $n \ge 1$ is
\begin{equation}
\Phi(n) = C(D(n+1)) - C(A(n)),
\label{eq:doidef}
\end{equation}
that is attained at the departure time of update message $n+1$ when the monitor uses the information of update $n$ for the last time.
%
%
\section{Delay and AoI Statistics}
\label{sec:aoi}
In this section, we define the queueing model and its statistical characterization. We derive a lemma for delay and AoI that is key to our later analysis of the DoI. This lemma also provides statistical delay $T_{\varepsilon}$ and AoI bounds $\Delta_{\varepsilon}$ that satisfy $\mathsf{P}[T(n) > T_{\varepsilon}] \le \varepsilon$ and $\mathsf{P}[\Delta(n) > \Delta_{\varepsilon}] \le \varepsilon$, respectively.
%
%
\subsection{Queueing Model}
We model queueing systems and networks thereof using a definition of a max-plus server~\cite[Def. 1]{fidler:multiserver} that is adapted from the definition of g-server from~\cite[Def. 6.3.1]{chang:performanceguarantees}.
\begin{definition}[Max-Plus Server]
\label{def:maxplusserver}
A system with arrival process $A(n)$ and departure process $D(n)$ is a max-plus server with service process $S(\nu,n)$ if it holds for all $n \ge 1$ that
\begin{equation*}
D(n) \le \max_{\nu \in [1,n]} \{ A(\nu) + S(\nu,n) \}.
\end{equation*}
\end{definition}
The general class of work-conserving, lossless, first-in first-out (fifo) queueing systems satisfies the definition of max-plus server with service process $S(\nu,n) = \sum_{m=\nu}^n L(m)$ where $L(m) \in \mathbb{R}_{+}$ is the service time of message $m \ge 1$~\cite[Lem. 1]{fidler:multiserver}. This includes G$\mid$G$\mid$1 queues~\cite[Ex. 6.2.3]{chang:performanceguarantees}. Since any tandem of max-plus servers is a max-plus server, too, the model extends naturally to networks of queues.

By insertion of Def.~\ref{def:maxplusserver} into the definition of network delay~\eqref{eq:delaydef} it follows readily for $n \ge 1$ that
\begin{equation}
T(n) \le \max_{\nu \in [1,n]} \{ S(\nu,n) - A(\nu,n) \}.
\label{eq:delay}
\end{equation}
Similarly, for the peak AoI~\eqref{eq:aoidef} we obtain for $n \ge 1$ that
\begin{align}
\Delta(n) \le \max \biggl\{ & \max_{\nu \in [1,n]} \{ S(\nu,n+1) - A(\nu,n) \}, \nonumber \\
& S(n+1,n+1) + A(n,n+1) \biggr\} .
\label{eq:aoi}
\end{align}
%
%
\subsection{Statistical Characterization}
We derive statistical tail bounds using Chernoff's theorem
\begin{equation}
\mathsf{P}[X \ge x] \le e^{-\theta x} \mathsf{M}_X(\theta),
\label{eq:chernoff}
\end{equation}
for any $\theta > 0$, where $\mathsf{M}_X(\theta) = \mathsf{E}[e^{\theta X}]$ is the moment generating function (MGF) of the random variable $X$ and $x$ is an arbitrary threshold parameter. We will frequently use that $\mathsf{M}_{X+Y}(\theta) = \mathsf{M}_{X}(\theta) \mathsf{M}_{Y}(\theta)$ for statistically independent random variables $X$ and $Y$.

We characterize the MGF of arrival and service processes by $(\sigma,\rho)$-envelopes defined in~\cite[Def. 7.2.1]{chang:performanceguarantees}. These are adapted to max-plus servers in~\cite[Def. 2]{fidler:multiserver}. We use arrival processes with independent and identically distributed (iid) increments $A(n-1,n)$ for $n \ge 1$, including deterministic increments as a special case. For iid increments the parameter $\sigma_A = 0$ and the arrival process is characterized by an envelope rate $\rho_A > 0$.
\begin{definition}[Service and Arrival Envelopes]
\label{def:enveloperates}
Each of the following statements for all $n \ge \nu \ge 1$ and $\theta > 0$. A service process, $S(\nu,n)$, has $(\overline{\sigma}_S(\theta),\overline{\rho}_S(\theta))$-upper envelope if
\begin{equation*}
\mathsf{E}\Bigl[e^{\theta S(\nu,n)}\Bigr] \le e^{\theta (\overline{\sigma}_S(\theta) + \overline{\rho}_S(\theta) (n-\nu+1))} .
\end{equation*}
An arrival process, $A(\nu,n)$, has $\underline{\rho}_A(\theta)$-lower envelope if
\begin{equation*}
\mathsf{E}\Bigl[e^{-\theta A(\nu,n)}\Bigr] \le e^{-\theta \underline{\rho}_A(-\theta) (n-\nu)} ,
\end{equation*}
and $\overline{\rho}_A(\theta)$-upper envelope if
\begin{equation*}
\mathsf{E}\Bigl[e^{\theta A(\nu,n)}\Bigr] \le e^{\theta \overline{\rho}_A(\theta) (n-\nu)} .
\end{equation*}
\end{definition}
Next, we obtain bounds of the MGF of delay and AoI that are an essential building block of the following derivations.
\begin{lemma}[MGF bounds of delay and AoI]
\label{lem:delayaoi}
Given arrivals $A(n)$ with iid increments and envelope parameters $(\underline{\rho}_A, \overline{\rho}_A)$ at a max-plus server $S(\nu,n)$ with envelope parameters $(\overline{\sigma}_S,\overline{\rho}_S)$. For any $\theta > 0$ that satisfies $\underline{\rho}_A(-\theta) > \overline{\rho}_S(\theta)$ it holds for the MGF of the delay $T(n)$ for any $n \ge 1$ that
\begin{equation*}
\mathsf{M}_{T}(\theta) \le \frac{e^{\theta (\overline{\sigma}_S(\theta)+\overline{\rho}_S(\theta))}}{1-e^{-\theta(\underline{\rho}_A(-\theta)-\overline{\rho}_S(\theta))}} ,
\end{equation*}
and for the MGF of the AoI $\Delta(n)$ for any $n \ge 1$ that
\begin{equation*}
\mathsf{M}_{\Delta}(\theta) \le \frac{e^{\theta (\overline{\sigma}_S(\theta)+2\overline{\rho}_S(\theta))}}{1-e^{-\theta(\underline{\rho}_A(-\theta)-\overline{\rho}_S(\theta))}} + e^{\theta (\overline{\sigma}_S(\theta)+\overline{\rho}_S(\theta)+\overline{\rho}_A(\theta))} .
\end{equation*}
\end{lemma}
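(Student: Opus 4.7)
The plan is to apply Chernoff-style manipulations directly to the sample-path inequalities~\eqref{eq:delay} and~\eqref{eq:aoi}, which already bound $T(n)$ and $\Delta(n)$ by maxima of differences of the service and arrival processes. The single probabilistic trick used throughout is the elementary inequality $e^{\theta \max_\nu X_\nu} \le \sum_\nu e^{\theta X_\nu}$ (valid for $\theta > 0$), which converts the maximum inside the expectation into a sum of MGFs outside the expectation. Everything else is bookkeeping on the envelope exponents and a geometric series.

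For the delay bound I would take expectations in~\eqref{eq:delay}, apply the union bound over $\nu$, and then use independence of arrivals and service to factor each summand as $\mathsf{E}[e^{\theta S(\nu,n)}]\cdot\mathsf{E}[e^{-\theta A(\nu,n)}]$. Substituting the upper service envelope and the lower arrival envelope from Def.~\ref{def:enveloperates} and reindexing by $k = n-\nu$ produces $e^{\theta(\overline{\sigma}_S+\overline{\rho}_S)} \sum_{k=0}^{n-1} e^{-\theta(\underline{\rho}_A(-\theta)-\overline{\rho}_S)k}$. Under the stability condition $\underline{\rho}_A(-\theta) > \overline{\rho}_S(\theta)$ the common ratio is strictly below one, so extending the sum to infinity yields the closed form stated in the lemma.

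For the AoI bound I would split the outer maximum in~\eqref{eq:aoi} via $e^{\theta \max\{a,b\}} \le e^{\theta a} + e^{\theta b}$ and handle the two summands separately. The first summand is treated exactly as in the delay case, with one bookkeeping difference: the service interval is of length $n-\nu+2$ rather than $n-\nu+1$, which propagates one extra factor $e^{\theta\overline{\rho}_S}$ in the numerator and accounts for the $2\overline{\rho}_S$ appearing in the exponent. The second summand $\mathsf{E}[e^{\theta(S(n+1,n+1)+A(n,n+1))}]$ factors by independence and is controlled by the service upper envelope on a single step (take $\nu = n+1$) times the arrival upper envelope on a single step, giving exactly $e^{\theta(\overline{\sigma}_S+\overline{\rho}_S+\overline{\rho}_A)}$.

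The only real obstacle is keeping the shifted indices and envelope arguments consistent across the two pieces of~\eqref{eq:aoi}; there is no genuine probabilistic difficulty beyond the union-over-max step and the geometric summation. One point worth flagging is that the lemma statement does not spell out arrival-service independence, but factoring the joint MGF requires it, and this is the standard assumption underlying the max-plus / network-calculus framework cited here.
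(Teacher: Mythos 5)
Your proposal is correct and follows essentially the same route as the paper's proof: bound the exponential of the maximum by the sum of exponentials, factor each term using arrival--service independence, insert the envelopes from Def.~\ref{def:enveloperates}, and sum the geometric series under the stability condition $\underline{\rho}_A(-\theta) > \overline{\rho}_S(\theta)$, with the AoI bookkeeping (the extra $\overline{\rho}_S$ from the service interval of length $n-\nu+2$ and the single-step term $e^{\theta(\overline{\sigma}_S+\overline{\rho}_S+\overline{\rho}_A)}$) matching the paper exactly. Your remark that the factorization requires arrival--service independence, though not spelled out in the lemma statement, is apt --- the paper's proof invokes exactly this when splitting the joint MGF.
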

\begin{proof}
We first show the derivation of the MGF of the delay. The MGF of the AoI follows similarly.
\paragraph{Delay}
We estimate the MGF of the sojourn time using the approach from~\cite{chang:performanceguarantees, fidler:momentcalculus}. It follows from~\eqref{eq:delay} for $n \ge 1$ and $\theta > 0$ that
\begin{align*}
\mathsf{M}_{T}(\theta,n) \le & \mathsf{E}\bigl[e^{\theta \max_{\nu \in [1,n]} \{ S(\nu,n) - A(\nu,n) \}}\bigr] \\
= & \mathsf{E}\biggl[\max_{\nu \in [1,n]} \bigl\{ e^{\theta ( S(\nu,n) - A(\nu,n) )}\bigr\}\biggr] \\
\le & \mathsf{E} \Biggl[ \sum_{\nu=1}^n e^{\theta ( S(\nu,n) - A(\nu,n) )} \Biggr] \\
= & \sum_{\nu=1}^n \mathsf{E} \bigl[ e^{\theta S(\nu,n)} \bigr] \mathsf{E} \bigl[e^{-\theta A(\nu,n)} \bigr] ,
\end{align*}
where we used independence of $S(\nu,n)$ and $A(\nu,n)$. By insertion of the envelope parameters we have
\begin{align*}
\mathsf{M}_{T}(\theta,n) \le & e^{\theta (\overline{\sigma}_S(\theta)+\overline{\rho}_S(\theta))} \sum_{\nu=1}^n \Bigl(e^{-\theta(\underline{\rho}_A(-\theta)-\overline{\rho}_S(\theta))}\Bigr)^{n-\nu} \\
\le & e^{\theta (\overline{\sigma}_S(\theta)+\overline{\rho}_S(\theta))} \sum_{\nu=0}^{\infty} \Bigl(e^{-\theta(\underline{\rho}_A(-\theta)-\overline{\rho}_S(\theta))}\Bigr)^{\nu} ,
\end{align*}
where $\sum_{\nu=0}^{\infty} x^\nu = 1/(1-x)$ if $x < 1$ concludes the proof, implying the stability condition $\underline{\rho}_A(-\theta) > \overline{\rho}_S(\theta)$.
\paragraph{AoI}
We use the same essential steps to estimate the MGF of the AoI. From~\eqref{eq:aoi} we have for $n \ge 1$ and $\theta > 0$ that
\begin{align*}
\mathsf{M}_{\Delta}(\theta,n)
\le & \sum_{\nu=1}^{n} \mathsf{E} \bigl[ e^{\theta S(\nu,n+1)} \bigr] \mathsf{E} \bigl[e^{-\theta A(\nu,n) } \bigr] \\
+ & \mathsf{E} \bigl[ e^{\theta S(n+1,n+1)} \bigr] \mathsf{E} \bigl[e^{\theta A(n,n+1 )} \bigr] \\
\le & e^{\theta (\overline{\sigma}_S(\theta)+2\overline{\rho}_S(\theta))} \sum_{\nu=1}^n \Bigl(e^{-\theta(\underline{\rho}_A(-\theta)-\overline{\rho}_S(\theta))}\Bigr)^{n-\nu} \\
+ & e^{\theta (\overline{\sigma}_S(\theta)+\overline{\rho}_S(\theta)+\overline{\rho}_A(\theta))} .
\end{align*}
Again, $\underline{\rho}_A(-\theta) > \overline{\rho}_S(\theta)$ achieves convergence if $n \rightarrow \infty$.
\end{proof}
%
%
\subsection{Statistical Performance Bounds}
\begin{figure}
\centering
\includegraphics[width=0.66\linewidth]{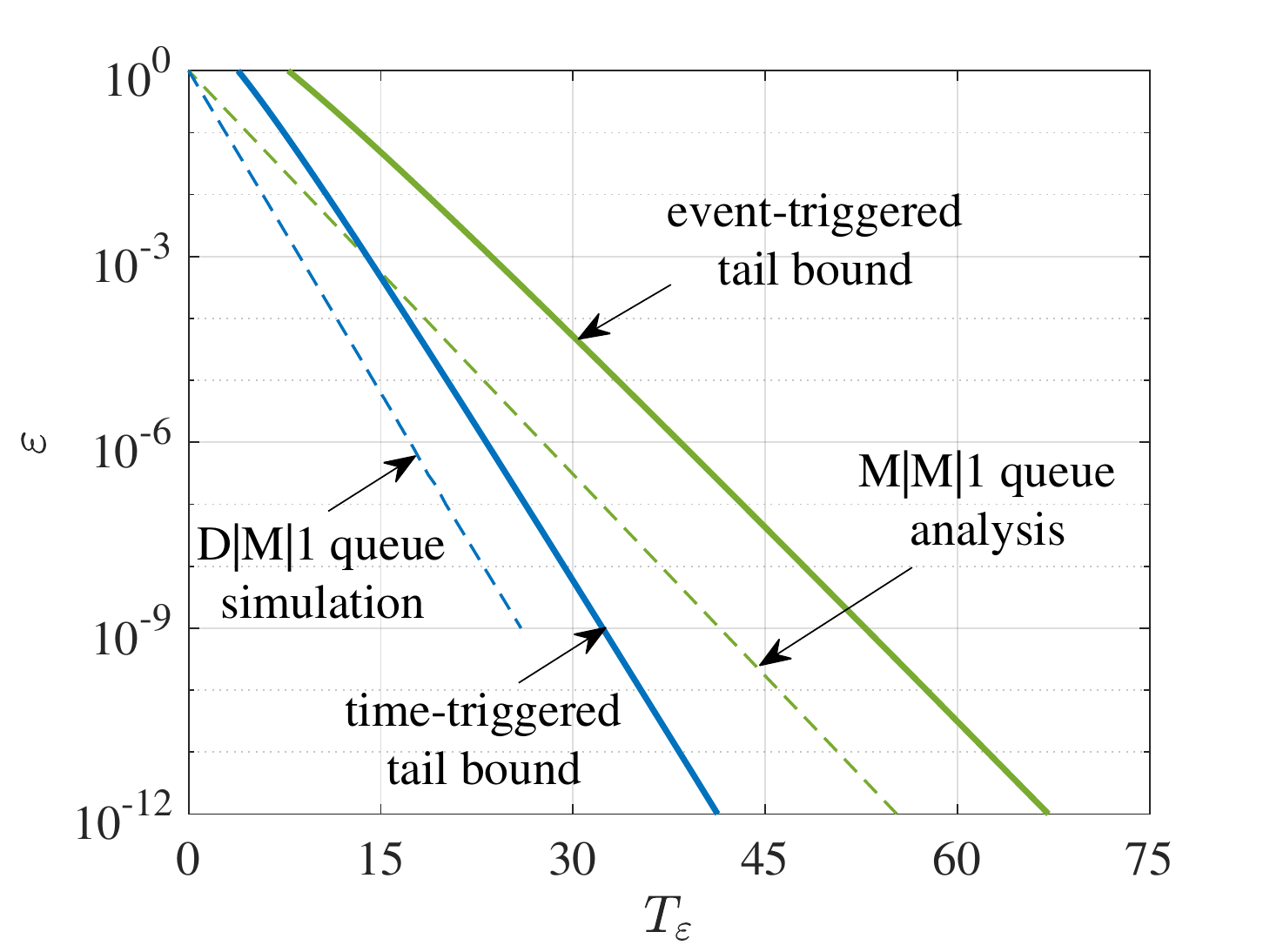}
\caption{Sojourn time bounds of the time-triggered system and the event-triggered system with exponential inter-event times, exponential service times, and parameter $\alpha = 1$. In this case, the time-triggered system is a D$\mid$M$\mid$1 queue and corresponding simulation results are shown for comparison, and the event-triggered system is an M$\mid$M$\mid$1 queue that has a known tail distribution.}
\label{fig:dm1mm1}
\end{figure}
Statistical delay and AoI bounds follow as an immediate corollary of Lem.~\ref{lem:delayaoi} and Chernoff's theorem~\eqref{eq:chernoff}. Specifically, we have for the delay for any $n \ge 1$ and $\theta > 0$ that
\begin{equation*}
\mathsf{P}[T(n) \ge T_{\varepsilon}] \le e^{-\theta T_{\varepsilon}} \mathsf{M}_T(\theta) =: \varepsilon .
\end{equation*}
Solving for $T_{\varepsilon}$ we have that
\begin{equation}
T_{\varepsilon}(\theta) = \frac{\ln \mathsf{M}_T(\theta) - \ln \varepsilon}{\theta} ,
\label{eq:statisticaldelaybound}
\end{equation}
and similarly for the AoI
\begin{equation}
\Delta_{\varepsilon}(\theta) = \frac{\ln \mathsf{M}_\Delta(\theta) - \ln \varepsilon}{\theta} ,
\label{eq:statisticalaoibound}
\end{equation}
are statistical upper bounds of delay and AoI, respectively, that are exceeded at most with probability $\varepsilon$. Since $T_{\varepsilon}(\theta)$ and $\Delta_{\varepsilon}(\theta)$ are valid upper bounds for any $\theta > 0$, we can optimize $\theta > 0$ to find the smallest upper bounds. Next, we evaluate these bounds for time-triggered and event-triggered systems, respectively.
%
%
\subsubsection{Time-triggered systems}
For a time-triggered system where update messages are generated at times $A(n) = n w$ for $n \ge 1$ and $w \in \mathbb{R}_+$ is the width of the update interval, the envelope parameters in Def.~\ref{def:enveloperates} for all $\theta > 0$ are simply
\begin{equation}
\underline{\rho}_A = w, \quad\quad \overline{\rho}_A = w .
\label{eq:timetriggeredparameters}
\end{equation}
%
%
\subsubsection{Event-triggered systems}
\begin{figure*}
\subfigure[$\lambda = 0.25$]{
\includegraphics[width=0.32\linewidth]{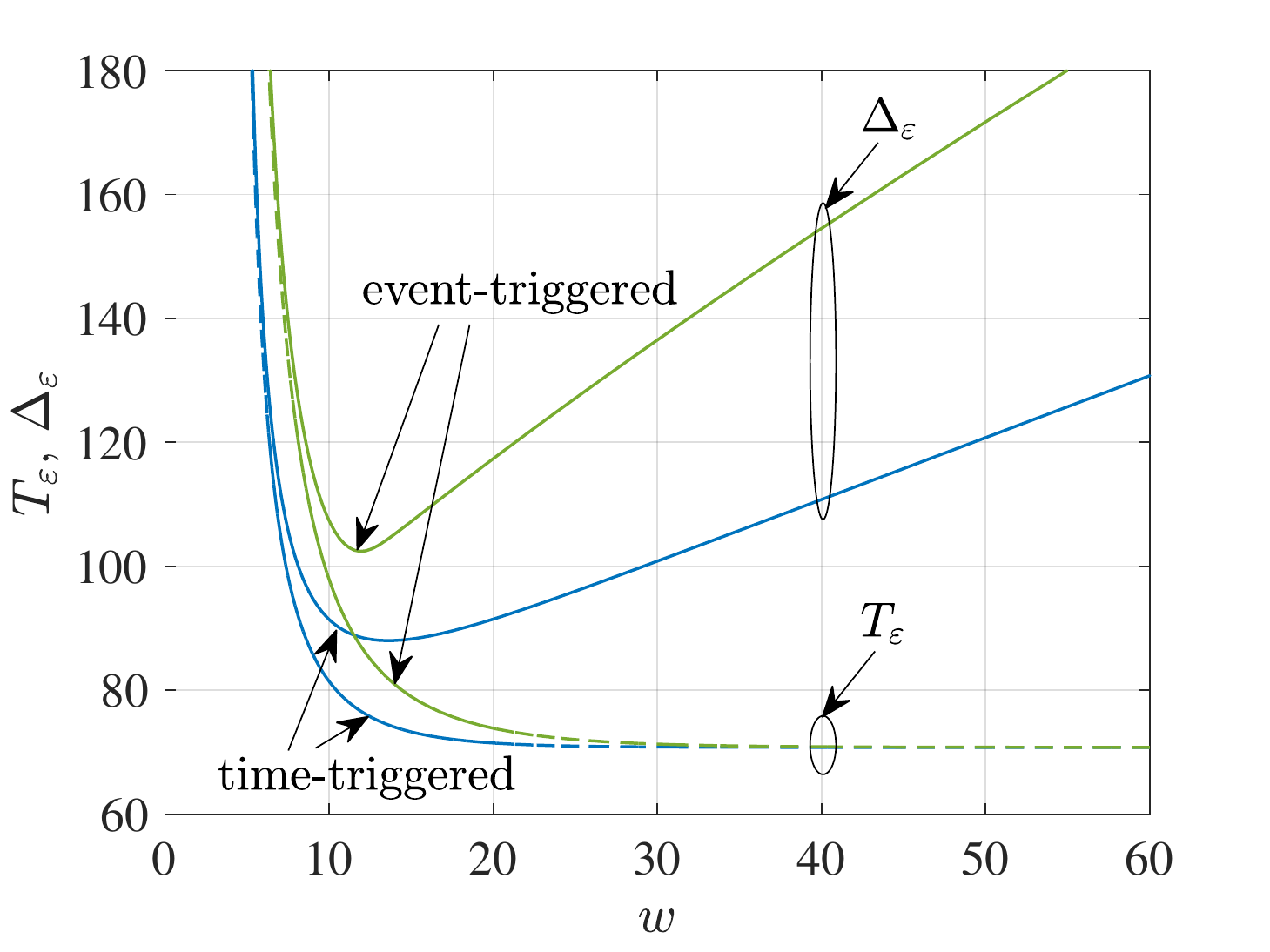}
\label{fig:delayaoilambda025}
}
\subfigure[$\lambda = 0.5$]{
\includegraphics[width=0.32\linewidth]{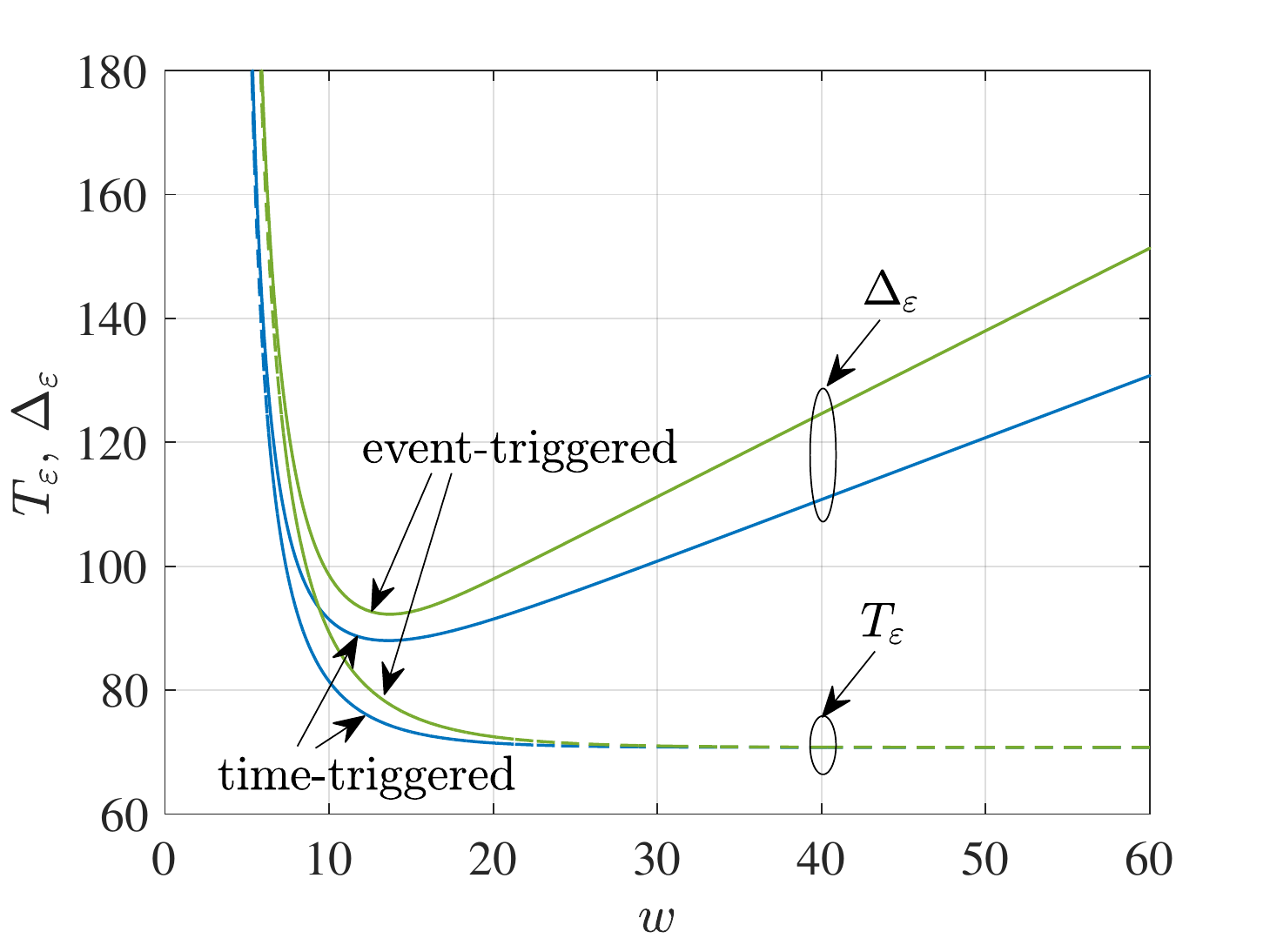}
\label{fig:delayaoilambda05}
}
\subfigure[$\lambda = 1$]{
\includegraphics[width=0.32\linewidth]{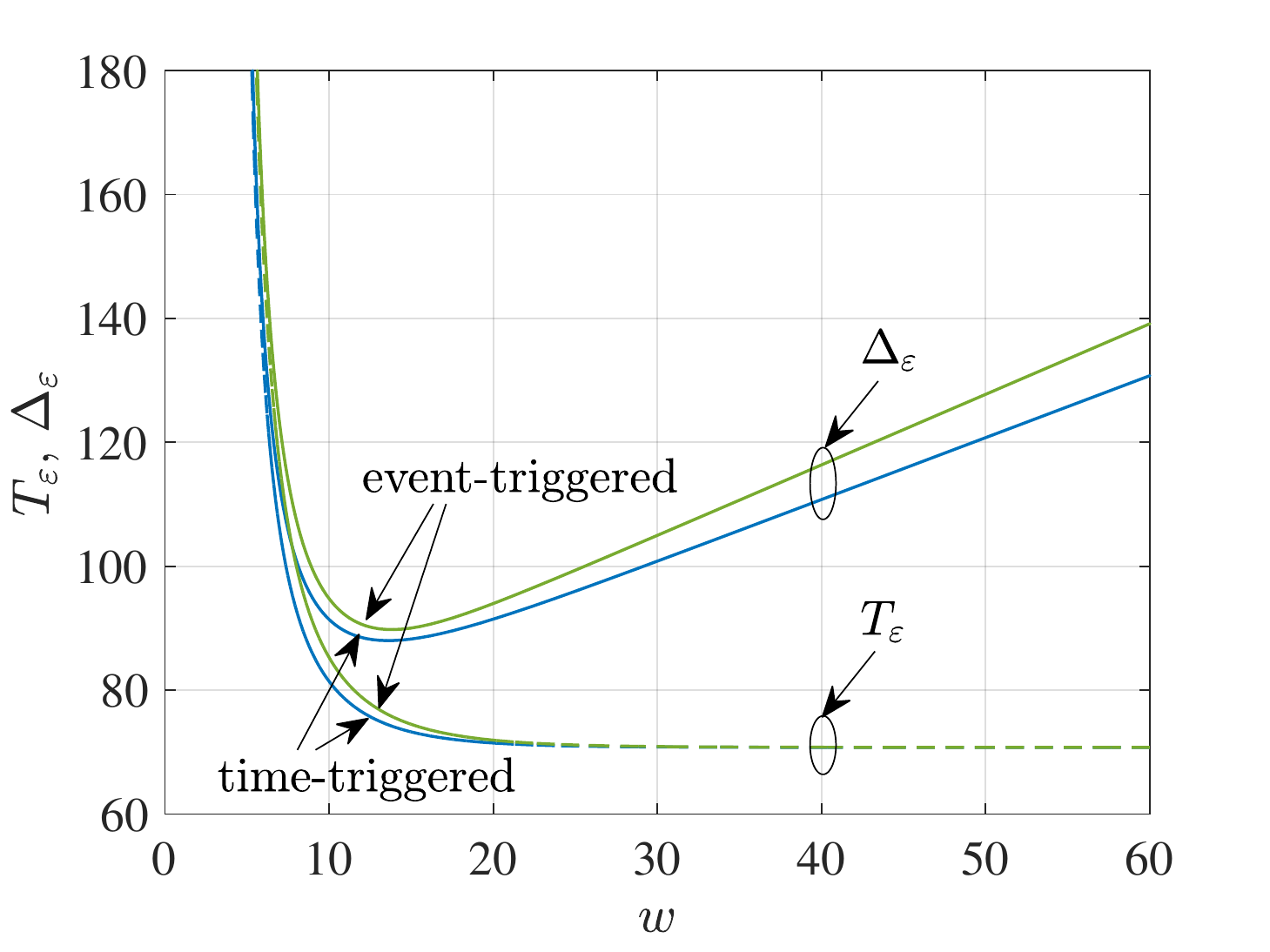}
\label{fig:delayaoilambda1}
}
\caption{Sojourn time and AoI bounds for $\varepsilon = 10^{-6}$ for the time-triggered and the event-triggered system. Inter-event times are exponential with parameter $\lambda$. The update interval $w$ of the time-triggered system and the event threshold $\alpha$ of the event-triggered system are varied, where $\alpha = \lambda w$ achieves the same mean network utilization for both systems.}
\label{fig:delayaoilambda}
\end{figure*}
For an event-triggered system, $A(n) = E(n\alpha)$ for $n \ge 1$ and $\alpha \in \mathbb{N}$ is a threshold parameter. We assume that inter-event times $I(n)$ are iid with MGF $\mathsf{M}_I(\theta)$. With $A(n) = \sum_{\nu=1}^{n\alpha} I(\nu)$, it follows for $\theta > 0$ that
\begin{equation}
\underline{\rho}_A(-\theta) = -\frac{\alpha}{\theta} \ln (\mathsf{M}_I(-\theta)), \quad \overline{\rho}_A(\theta) = \frac{\alpha}{\theta} \ln (\mathsf{M}_I(\theta)).
\label{eq:eventtriggeredparameters}
\end{equation}

If the time between events is exponential with parameter $\lambda > 0$ we have for $\theta < \lambda$ that
\begin{equation*}
\mathsf{M}_{I}(\theta) = \frac{\lambda}{\lambda-\theta} .
\end{equation*}
In this case, the sensor signal $C(t)$ is a Poisson counting process with parameter $\lambda$. Further, the time between two event-triggered update messages is iid Erlang with $\alpha$ and $\lambda$.
%
%
\subsubsection{Service times}
We consider messages of variable length and denote $L(n)$ the service time of message $n \ge 1$. It holds that $S(\nu,n) = \sum_{m=\nu}^n L(m)$~\cite[Lem. 1]{fidler:multiserver} and considering iid service times it follows for $\theta > 0$ that $\overline{\sigma}_S = 0$ and
\begin{equation}
\overline{\rho}_S(\theta) = \frac{1}{\theta} \ln (\mathsf{M}_{L}(\theta)) .
\label{eq:exposerviceparameter}
\end{equation}
Considering exponential service times with parameter $\mu > 0$ we have for $\theta < \mu$ that
\begin{equation*}
\mathsf{M}_{L}(\theta) = \frac{\mu}{\mu-\theta} .
\end{equation*}
We will also consider the case of deterministic message service times $L(n)=l$ for $n \ge 1$ and $l > 0$ which gives $\overline{\rho}_S = l$.
%
%
\subsubsection{Numerical results}
Statistical delay and AoI bounds follow from~\eqref{eq:statisticaldelaybound} and~\eqref{eq:statisticalaoibound}, respectively, by insertion of the envelope parameters~\eqref{eq:timetriggeredparameters} or~\eqref{eq:eventtriggeredparameters}, and~\eqref{eq:exposerviceparameter} into Lem.~\ref{lem:delayaoi}. We optimize the free parameter $\theta$ numerically to obtain the smallest upper bound.

The time-triggered system is a D$\mid$G$\mid$1 queue or in case of exponential service times a D$\mid$M$\mid$1 queue, respectively. The event-triggered system is of type G$\mid$G$\mid$1, respectively, Erlang-$\alpha$$\mid$M$\mid$1 in case of exponential inter-event times and exponential service times. For $\alpha=1$ it becomes a basic M$\mid$M$\mid$1 queue. For reference, the exact tail distribution of $T_{\varepsilon}$ of the M$\mid$M$\mid$1 queue is known~\cite{bolch:queueingnetworks} as
\begin{equation}
\varepsilon = e^{-\mu\left(1-\frac{\lambda}{\mu}\right)T_{\varepsilon}} .
\label{eq:exactmm1tail}
\end{equation}

In Fig.~\ref{fig:dm1mm1} we display the tail decay of sojourn time bounds of the time-triggered and the event-triggered system with exponential inter-event times with parameter $\lambda=0.5$ and exponential service times with parameter $\mu=1$. We consider the case $\alpha=1$ for the event-triggered system. For the time-triggered system we choose parameter $w=2$ that achieves the same average network utilization. For comparison, we include empirical quantiles from $10^9$ sojourn time samples obtained by simulation of a D$\mid$M$\mid$1 queue and the tail distribution of the M$\mid$M$\mid$1 queue~\eqref{eq:exactmm1tail}. The tail bounds exhibit the correct speed of tail decay and show the expected accuracy~\cite{fidler:netcalcguide}.

In Fig.~\ref{fig:delayaoilambda} we compare delay and AoI bounds with probability $\varepsilon = 10^{-6}$ of the time-triggered and the event-triggered system. Service times and inter-event times are exponential, where the service rate is $\mu=0.25$ and different sensor event rates $\lambda \in \{0.25, 0.5, 1\}$ are used. While the arrival process of the time-triggered system is not affected by $\lambda$, the arrival process of the event-triggered system is Erlang with parameters $\alpha$ and $\lambda$. We show results for different update intervals $w$ and we set the event threshold $\alpha = \lambda w$, that is the mean number of events during an interval of duration $w$, to achieve the same average utilization for the time-triggered and the event-triggered system.

\begin{figure}
\centering
\vspace{-16pt}
\includegraphics[width=0.66\linewidth]{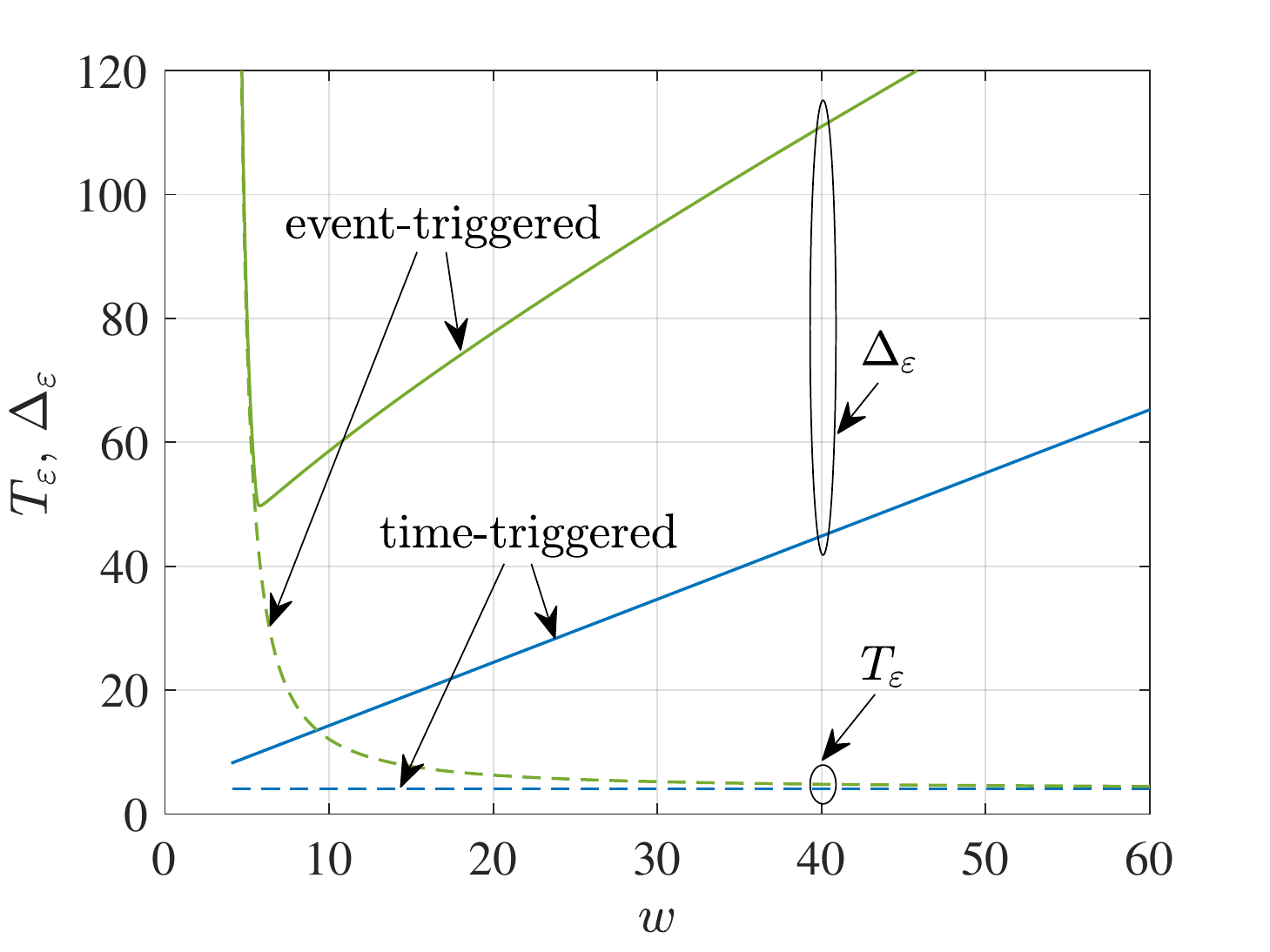}
\caption{Same parameters as in Fig.~\ref{fig:delayaoilambda05} but deterministic service times.}
\vspace{-12pt}
\label{fig:delayaoilambda05detchannel}
\end{figure}

It can be observed that all curves in Fig.~\ref{fig:delayaoilambda} show a tremendous increase if $w$ and $\alpha$ become small. This corresponds to high network utilization that induces queueing delays. In case of large $w$ and $\alpha$, the network delay converges to the service time quantile of a message, whereas the AoI grows almost linearly due to increasingly rare update messages. Generally, it can be observed that the event-triggered system shows worse delay and AoI performance than the time-triggered system. Similar observations have also been made for periodic versus random arrivals in~\cite{kaul:ageofinformationqueue, inoue:aoisingleserverqueues, champati:ageofinformationgigiqueue, modiano:informationfreshness}. This is a consequence of the variability of the arrival process of the event-triggered system that leads to two different effects: bursts of update messages cause queueing delays in the network, this effect is dominant in the left of the graphs in Fig.~\ref{fig:delayaoilambda}; or the absence of update messages causes idle waiting, dominant in the right of the graphs. With increasing $\lambda$ and $\alpha$ the arrival process becomes smoother and the performance of the event-triggered system approaches that of the time-triggered system, see Fig.~\ref{fig:delayaoilambda1}.

Fig.~\ref{fig:delayaoilambda05detchannel} uses the same parameters as Fig.~\ref{fig:delayaoilambda05} with the exception that the network service times are deterministic, i.e., the queue is served with a constant service rate of $0.25$. In this case, the time-triggered system is a D$\mid$D$\mid$1 queue and the bounds obtained from Lem.~\ref{lem:delayaoi} correctly identify the delay $T_{\varepsilon} = 4$ and the AoI $\Delta_{\varepsilon} = 4+w$ for all $w > 4$. The event-triggered system is an Erlang-$\alpha$$\mid$D$\mid$1 queue. For small $\alpha$ corresponding to high utilization the burstiness of the arrivals causes large queueing delays. With increasing $\alpha$ the queueing delays diminish quickly and the system switches sharply to a regime, where the AoI is dominated by idle waiting due to too infrequent update messages.
%
%
\section{DoI Bounds}
\label{sec:doi}
In this section, we investigate how event-triggered systems perform compared to time-triggered systems if we consider the signal-aware DoI metric. We derive statistical bounds of the DoI of time-triggered and event-triggered systems and show numerical as well as simulation results.
%
%
\subsection{Analysis}
We derive statistical bounds of the peak DoI $\Phi_{\varepsilon}$ that satisfy $\mathsf{P}[\Phi(n) > \Phi_{\varepsilon}] \le \varepsilon$. The analysis of DoI is more involved due to the use of the doubly stochastic processes $C(A(n))$ and $C(D(n))$. As before, we consider time-triggered systems, where update messages are generated at times $A(n) = n w$ for $n \ge 1$ and $w \in \mathbb{R}_+$ is the width of the update interval, and event-triggered systems, where update messages are generated at times $A(n) = E(n \alpha)$ and $\alpha \in \mathbb{N}$ is the event threshold, respectively. The following theorem uses Lem.~\ref{lem:delayaoi} to state our main result.
\begin{theorem}[DoI bounds]
\label{th:doibounds}
Given the assumptions of Lem.~\ref{lem:delayaoi}. Consider events with iid inter-event times $I(n)$ for $n \ge 1$ and denote $J(t)$ the residual inter-event time at time $t \ge 0$.

For the DoI $\Phi(n)$ of a time-triggered system with update interval $w$ and envelope parameters~\eqref{eq:timetriggeredparameters}, it holds for all $n \ge 1$, $\theta > 0$, and $\Phi_{\varepsilon} \in \mathbb{N}_0$ that
\begin{equation*}
\mathsf{P}[\Phi(n) > \Phi_{\varepsilon}] \le \mathsf{M}_{\Delta}(\theta) \mathsf{M}_{J(A(n))}(-\theta) (\mathsf{M}_I(-\theta))^{\Phi_{\varepsilon}} .
\end{equation*}

For the DoI $\Phi(n)$ of an event-triggered system with threshold $\alpha$, and envelope parameters~\eqref{eq:eventtriggeredparameters}, it holds for all $n \ge 1$, $\theta > 0$, and $\Phi_{\varepsilon} \in \mathbb{N}_0 \ge \alpha-1$ that
\begin{equation*}
\mathsf{P}[\Phi(n) > \Phi_{\varepsilon}] \le \mathsf{M}_T(\theta) (\mathsf{M}_I(-\theta))^{\Phi_{\varepsilon}-\alpha+1} .
\end{equation*}
\end{theorem}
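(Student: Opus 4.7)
The central idea is that $\Phi(n)>\Phi_{\varepsilon}$ holds if and only if at least $\Phi_{\varepsilon}+1$ events occur in the interval $(A(n),D(n+1)]$; equivalently, a sum of iid inter-event times is bounded above by either the peak AoI $\Delta(n)$ in the time-triggered case or the network delay $T(n+1)$ in the event-triggered case. Once the event is in the form $\{X\le Y\}$ with $X$ an independent sum of inter-event times, the Chernoff-style inequality $\mathbf{1}\{X\le Y\}\le e^{\theta(Y-X)}$ valid for $\theta>0$ turns the tail probability into a product of MGFs, and the bounds from Lem.~\ref{lem:delayaoi} supply the queueing-side factor.

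For the time-triggered case with $A(n)=nw$, the sampling instant is in general not an event epoch, so the $(\Phi_{\varepsilon}+1)$-th event after $A(n)$ occurs at time $A(n)+J(A(n))+I'_1+\cdots+I'_{\Phi_{\varepsilon}}$, where $I'_1,I'_2,\ldots$ denote the subsequent iid copies of $I$. Hence $\Phi(n)>\Phi_{\varepsilon}$ is equivalent to
\begin{equation*}
J(A(n)) + \sum_{k=1}^{\Phi_{\varepsilon}} I'_k \le \Delta(n) .
\end{equation*}
Since $A(n)$ is deterministic and the sensor process is independent of the queueing system, the three pieces $\Delta(n)$, $J(A(n))$ and the $I'_k$ are mutually independent. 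Chernoff together with MGF factorisation and the bound on $\mathsf{M}_{\Delta}(\theta)$ from Lem.~\ref{lem:delayaoi} then yields the first inequality.

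For the event-triggered case with $A(n+1)=E((n+1)\alpha)$, exactly $\alpha$ events lie in $(A(n),A(n+1)]$ by construction, so $\Phi(n)=\alpha+M$ with $M$ counting the events in $(A(n+1),D(n+1)]$. Because $A(n+1)$ is itself an event epoch no residual term appears, and $\Phi(n)>\Phi_{\varepsilon}$ is equivalent to
\begin{equation*}
\sum_{k=1}^{\Phi_{\varepsilon}-\alpha+1} I((n+1)\alpha+k) \le T(n+1) .
\end{equation*}
The inter-event times on the left are indexed beyond $(n+1)\alpha$ and are therefore independent of $T(n+1)$, which depends only on $I(1),\dots,I((n+1)\alpha)$ and on the service times. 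Chernoff combined with $\mathsf{M}_{T}(\theta)$ from Lem.~\ref{lem:delayaoi} then gives the event-triggered bound.

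The main obstacle is the bookkeeping in the time-triggered case: because the deterministic clock $A(n)=nw$ does not coincide with an event epoch, a residual factor $\mathsf{M}_{J(A(n))}(-\theta)$ appears and one must verify carefully that $J(A(n))$ together with the fresh inter-event times $I'_k$ is jointly independent of $\Delta(n)$. This rests on the renewal structure of the event process and its independence from the network, and it is the only step that goes beyond routine Chernoff computation. The event-triggered case sidesteps the issue entirely: sampling coincides with an event, so the post-sampling inter-event times are automatically iid copies independent of everything prior to $A(n+1)$.
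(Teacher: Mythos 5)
Your proposal is correct and follows essentially the same route as the paper: rewriting $\{\Phi(n)>\Phi_{\varepsilon}\}$ as a sum of fresh inter-event times (plus the residual $J(A(n))$ in the time-triggered case) being bounded by $\Delta(n)$, respectively $T(n+1)$, then applying Chernoff with the MGF bounds of Lem.~\ref{lem:delayaoi} and the independence of the sensor renewals from the queueing quantities. The independence and residual-time bookkeeping you flag as the delicate step is exactly the point the paper handles via its intermediate decomposition of $C(t+\tau)-C(t)$.
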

The MGF of the residual inter-event time can be estimated as $\mathsf{M}_{J(t)}(-\theta) \le 1$ for $\theta > 0$. For a memoryless distribution we also have $\mathsf{M}_{J(t)}(-\theta) = \mathsf{M}_{I}(-\theta)$.

Equating the bound for time-triggered systems in Th.~\ref{th:doibounds} with $\varepsilon$ and considering a memoryless inter-event distribution, we can solve for
\begin{equation*}
\Phi_{\varepsilon} = \biggl\lceil\frac{\ln \varepsilon - \ln \mathsf{M}_{\Delta}(\theta)}{\ln \mathsf{M}_I(-\theta)} \biggl\rceil - 1,
\end{equation*}
and for event-triggered systems
\begin{equation*}
\Phi_{\varepsilon} = \biggl\lceil\frac{\ln \varepsilon - \ln \mathsf{M}_{T}(\theta)}{\ln \mathsf{M}_I(-\theta)} \biggl\rceil  + \alpha - 1.
\end{equation*}
\begin{proof} We start with the proof for event-triggered systems, since time-triggered systems pose some additional difficulties.
%
%
\paragraph{Event-triggered system}
By definition of the event-triggered system we have $C(A(n)) = n \alpha$. Using~\eqref{eq:eventcount}, we also have $C(D(n+1)) = \max \{\nu \ge 0: E(\nu) \le D(n+1)\}$. Further, for the last expression we know that $\nu \ge (n+1)\alpha$, since $D(n+1) \ge A(n+1)$ and hence $C(D(n+1)) \ge C(A(n+1)) = (n+1)\alpha$. By insertion into~\eqref{eq:doidef} it holds for $n \ge 0$ that
\begin{equation*}
\Phi(n) = \max \{\nu \ge (n+1)\alpha: E(\nu) \le D(n+1)\} - n \alpha .
\end{equation*}
With a variable substitution it follows that
\begin{equation*}
\Phi(n) = \alpha + \max \{\nu \ge 0: E((n+1)\alpha+\nu) \le D(n+1)\}.
\end{equation*}
We use $D(n+1) = A(n+1) + T(n+1)$ and $A(n+1) = E((n+1)\alpha) = \sum_{m=1}^{(n+1)\alpha} I(m)$ to obtain
\begin{equation*}
\Phi(n) = \alpha + \max \Biggl\{\nu \ge 0: \sum_{m=(n+1)\alpha+1}^{(n+1)\alpha+\nu} I(m) \le T(n+1)\Biggr\}.
\end{equation*}
Now, choose some $\Phi_{\varepsilon} \in \mathbb{N}_0 \ge \alpha-1$. The case $\Phi(n) > \Phi_{\varepsilon}$ occurs iff $\nu = \Phi_{\varepsilon}-\alpha+1$ satisfies the condition above, i.e., $\sum_{m=(n+1)\alpha+1}^{(n+1)\alpha+\nu} I(m) \le T(n+1)$. It follows that
\begin{equation*}
\mathsf{P}[\Phi(n) > \Phi_{\varepsilon}] = \mathsf{P} \Biggl[T(n+1) - \sum_{m=(n+1)\alpha+1}^{(n+1)\alpha+\Phi_{\varepsilon}-\alpha+1} I(m) \ge 0  \Biggr] .
\end{equation*}
With Chernoff's theorem~\eqref{eq:chernoff} we have $\mathsf{P}[X \ge 0] \le \mathsf{M}_X(\theta)$ for $\theta > 0$ so that
\begin{equation*}
\mathsf{P}[\Phi(n) > \Phi_{\varepsilon}] \le \mathsf{M} \Biggl[T(n+1) - \sum_{m=(n+1)\alpha+1}^{(n+1)\alpha+\Phi_{\varepsilon}-\alpha+1} I(m) \Biggr] (\theta) .
\end{equation*}
The result of Th.~\ref{th:doibounds} follows for iid inter-event times $I(m)$. Note that for iid inter-event times $T(n+1)$ is independent of events that occur after $A(n+1) = E((n+1)\alpha)$.
%
%
\begin{figure*}
\subfigure[Deterministic events, exponential service]{
\includegraphics[width=0.32\linewidth]{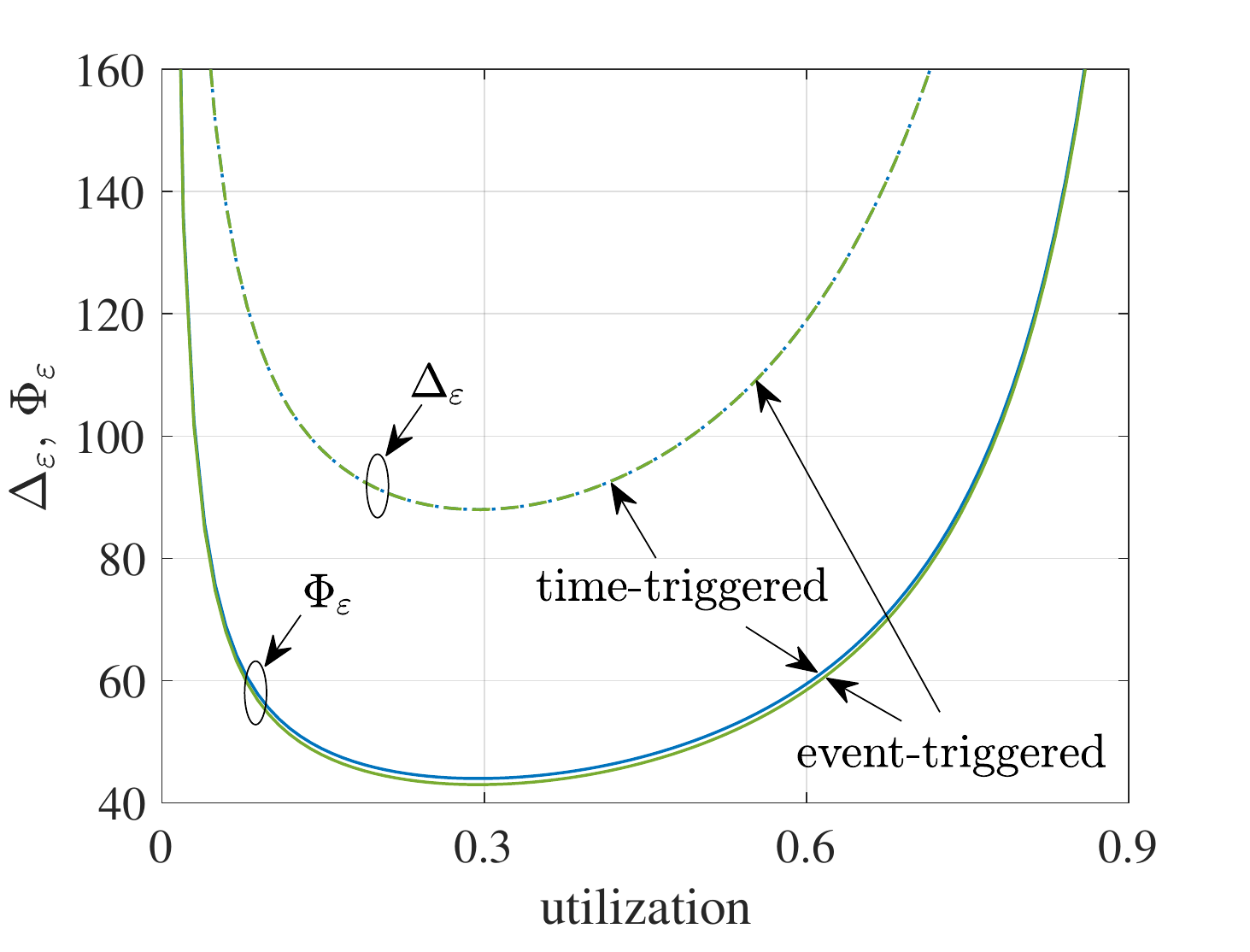}
\label{fig:delayaoidoidetsensor}
}
\subfigure[Exponential events, exponential service]{
\includegraphics[width=0.32\linewidth]{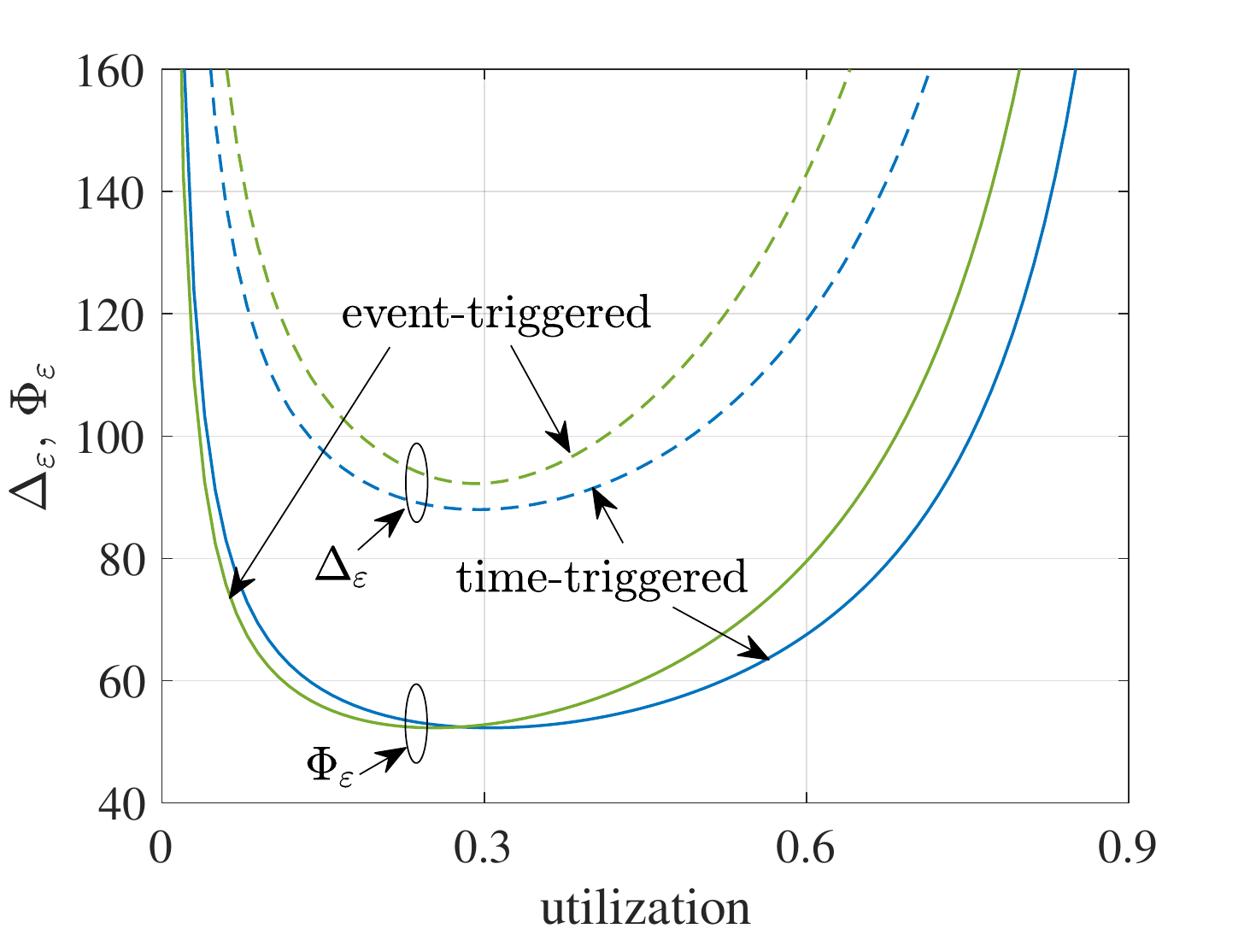}
\label{fig:doiexpchannel}
}
\subfigure[Exponential events, deterministic service]{
\includegraphics[width=0.32\linewidth]{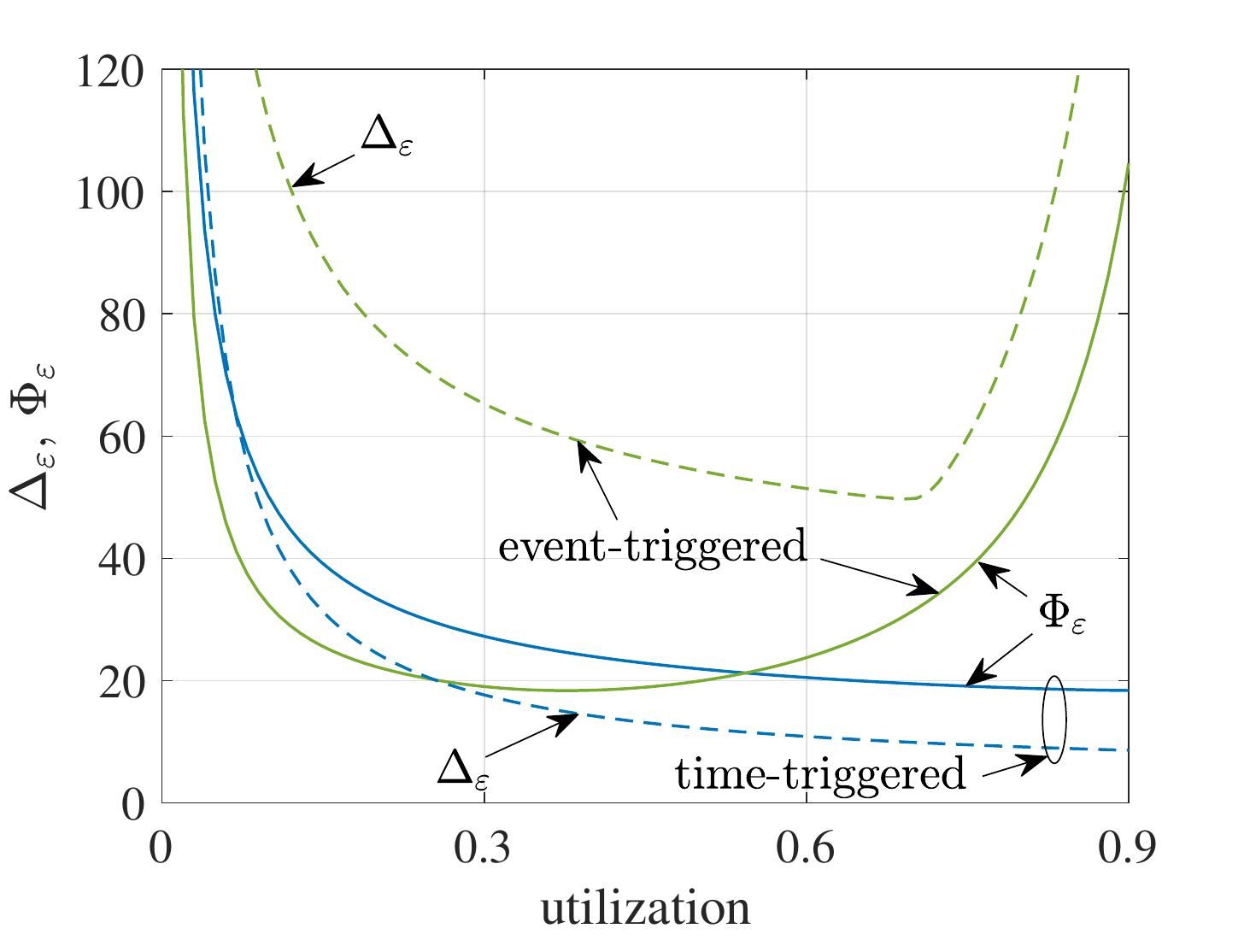}
\label{fig:doidetchannel}
}
\caption{AoI and DoI bounds for $\varepsilon = 10^{-6}$ for the time-triggered and the event-triggered system. The update interval width $w$ and the event threshold $\alpha$ are adjusted to achieve the desired network utilization.}
\label{fig:doi}
\end{figure*}
\paragraph{Time-triggered systems}
For time-triggered systems, we have the additional difficulty that the generation of messages is not synchronized with the occurrence of events. Instead, at time $t \ge 0$, e.g., $t=A(n)$, we only know that the last event occurred at time $E(C(t))$ and the next event occurs at time $E(C(t)+1) = E(C(t))+I(C(t)+1)$. We denote $J(t)$ the residual inter-event time at time $t \ge 0$ until the next event occurs, i.e., $J(t) = E(C(t)+1) - t$. It follows that
\begin{equation}
J(t) = E(C(t)) + I(C(t)+1) - t .
\label{eq:residualtime}
\end{equation}

First, we formalize an intermediate result. Consider some times $t,\tau \ge 0$. From~\eqref{eq:eventcount} we have
\begin{equation}
C(t+\tau) = \max \{\nu \ge C(t): E(\nu) \le t + \tau \} .
\label{eq:intermediate1}
\end{equation}
For $\nu \ge C(t) + 1$ we can write
\begin{align}
E(\nu) = & E(C(t)) + \sum_{m=C(t)+1}^{\nu} I(m) \nonumber \\
= & t +  J(t) + \sum_{m=C(t)+2}^{\nu} I(m),
\label{eq:intermediate2}
\end{align}
where we use~\eqref{eq:residualtime} in the second step. By insertion of~\eqref{eq:intermediate2} for $\nu \ge C(t)+1$ into~\eqref{eq:intermediate1} and noting that the case $\nu = C(t)$ is trivial, we obtain that
\begin{align*}
& C(t+\tau) \\
= & \max \Biggl\{ \nu \ge C(t): J(t)1_{\nu \ge C(t)+1} + \sum_{m=C(t)+2}^{\nu} I(m) \le \tau \Biggr\} \\
= & C(t) + \max \Biggl\{\nu \ge 0 : J(t)1_{\nu \ge 1} + \sum_{m=C(t)+2}^{C(t)+\nu} I(m) \le \tau \Biggr\},
\end{align*}
where $1_{(.)}$ is the indicator function that is one if the argument is true and zero otherwise.

Next, we insert $D(n+1) = A(n) + \Delta(n)$ from~\eqref{eq:aoidef} into~\eqref{eq:doidef} and with the previous result we obtain by substitution of $t=A(n)$ and $\tau = \Delta(n)$ for $n \ge 1$ that
\begin{multline*}
\Phi(n) = C(A(n)+\Delta(n)) - C(A(n)) = \\
\max \Biggl\{ \nu \ge 0 : J(A(n))1_{\nu \ge 1} + \sum_{m=C(A(n))+2}^{C(A(n))+\nu} I(m) \le \Delta(n) \Biggr\}.
\end{multline*}
Now, choose some $\Phi_{\varepsilon} \in \mathbb{N}_0$. The case $\Phi(n) > \Phi_{\varepsilon}$ occurs iff $\nu = \Phi_{\varepsilon}+1$ satisfies the condition above. It follows that
\begin{multline*}
\mathsf{P}[\Phi(n) > \Phi_{\varepsilon}] \\ = \mathsf{P}\Biggl[ \Delta(n) - J(A(n)) - \sum_{m=C(A(n))+2}^{C(A(n))+\Phi_{\varepsilon}+1} I(m) \ge 0 \Biggr] .
\end{multline*}
With Chernoff's theorem~\eqref{eq:chernoff} we have for $\theta > 0$ that
\begin{multline*}
\mathsf{P}[\Phi(n) > \Phi_{\varepsilon}] \\ \le \mathsf{M}\Biggl[ \Delta(n) - J(A(n)) - \sum_{m=C(A(n))+2}^{C(A(n))+\Phi_{\varepsilon}+1} I(m) \Biggr](\theta) .
\end{multline*}
The result of Th.~\ref{th:doibounds} follows for iid inter-event times $J(A(n))$ and $I(m)$. We note that in a time-triggered system $\Delta(n)$ is independent of the occurrence of events.
\end{proof}
%
%
\subsection{Numerical Results}
\begin{figure*}
\subfigure[Empirical distribution]{
\includegraphics[width=0.32\linewidth]{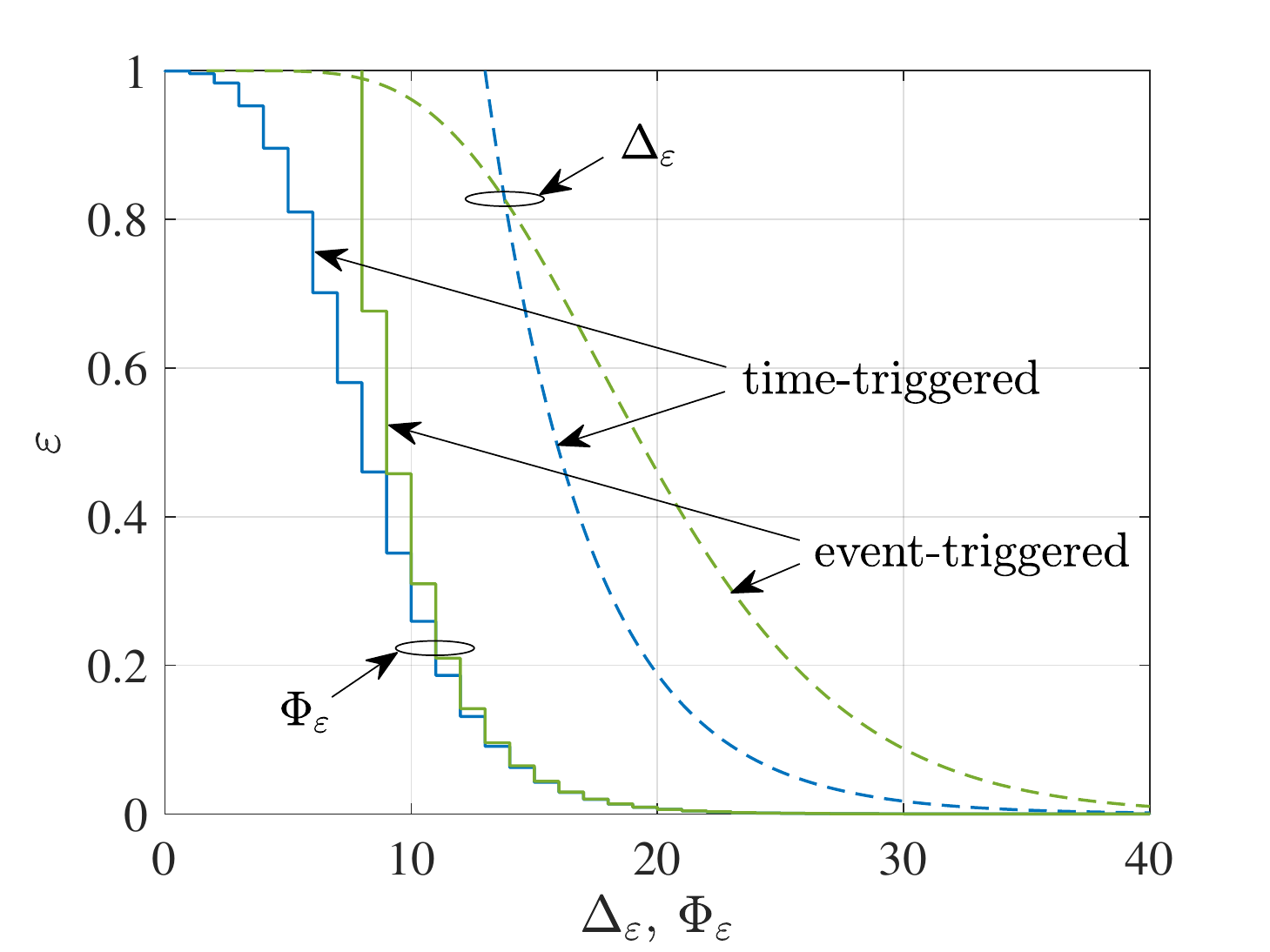}
\label{fig:aoidoilinearsim05}
}
\subfigure[Tail decay, time-triggered system]{
\includegraphics[width=0.32\linewidth]{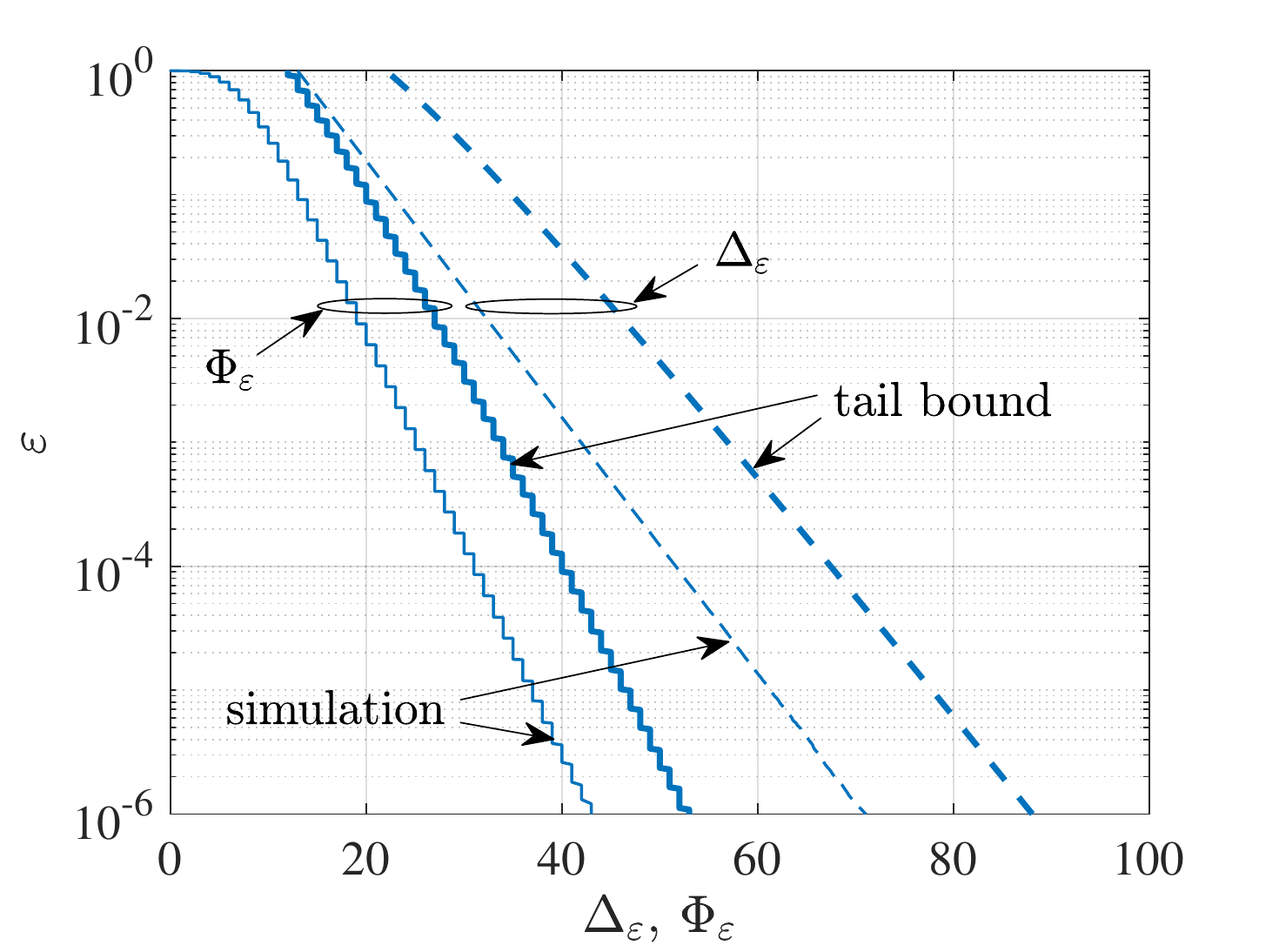}
\label{fig:aoidoittsim05}
}
\subfigure[Tail decay, event-triggered system]{
\includegraphics[width=0.32\linewidth]{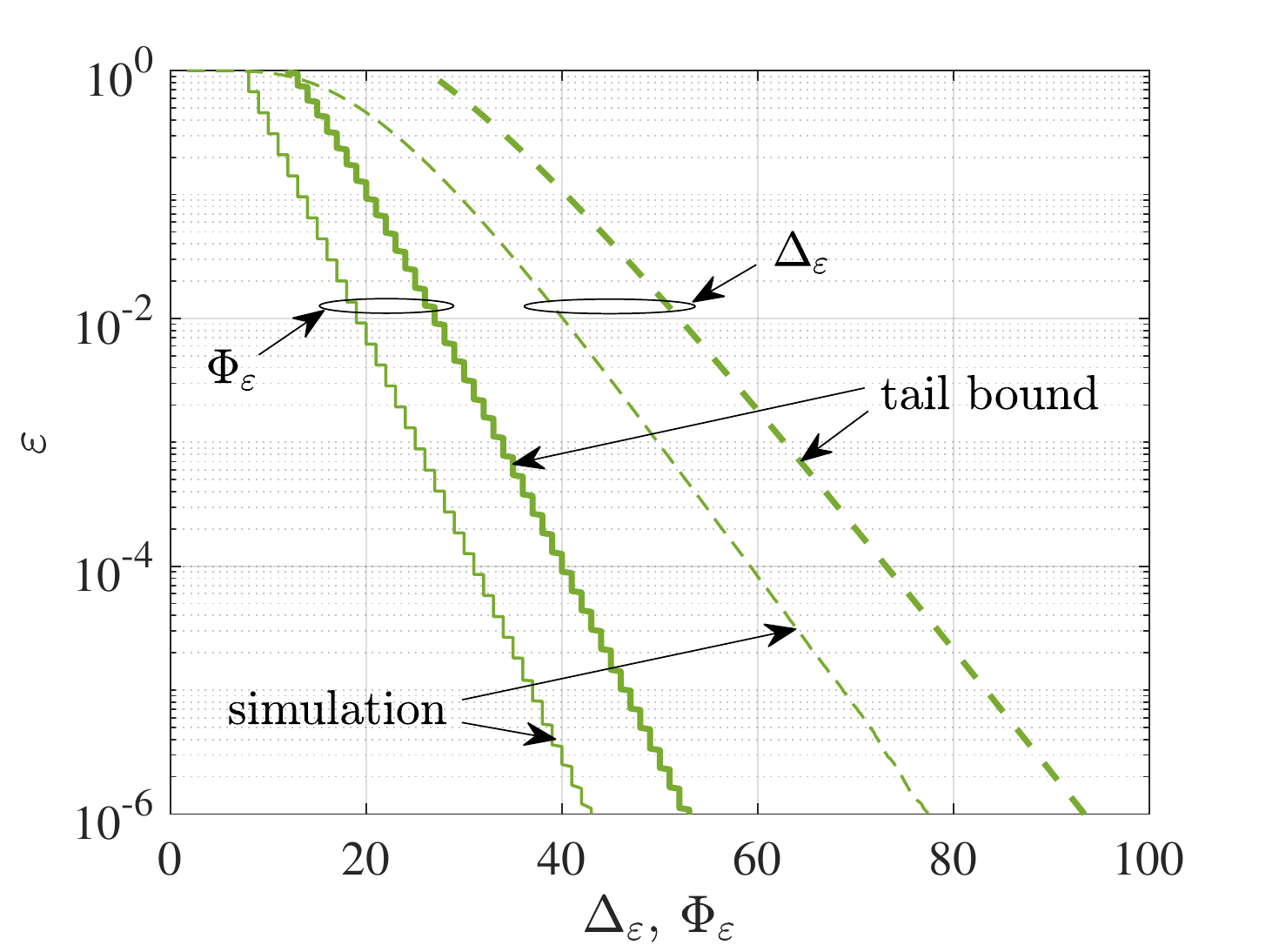}
\label{fig:aoidoietsim05}
}
\caption{AoI and DoI distribution for the systems in Fig.~\ref{fig:doiexpchannel}. For the time triggered-systems the update interval width is $w=13$ and for the event-triggered system the event threshold is $\alpha=8$. These parameters corresponds to a utilization of $0.325$ and $0.25$, respectively, that minimize the $\varepsilon = 10^{-6}$ DoI bounds.}
\label{fig:doisim}
\end{figure*}
In Fig.~\ref{fig:doi} we show tail bounds of the AoI and DoI for $\varepsilon = 10^{-6}$. The bounds are derived using Lem.~\ref{lem:delayaoi} and Th.~\ref{th:doibounds}. The free parameter $\theta$ is optimized numerically. We consider a range of relevant time-triggered and event-triggered systems. In all cases, the mean rate of sensor events is $\lambda = 0.5$ and the mean service rate of the network queue is $\mu =0.25$. The width of the update interval $w$ of the time-triggered system and the event threshold $\alpha$ of the event-triggered system are varied in unison so that both cause the same network utilization, that is $1/(w\mu)$ and $\lambda/(\alpha\mu)$, respectively. We use the network utilization as the abscissa. For reasons of presentability, we mostly ignore the integer constraints of $\alpha$ and $\Delta_{\varepsilon}$ in the figures.
\paragraph{Deterministic events, exponential service}
In Fig.~\ref{fig:delayaoidoidetsensor} we consider exponential network service times with parameter $\mu$ and a deterministic sensor signal, i.e., periodic events with deterministic inter-event times $1/\lambda = 2$. This degenerate case serves as a reference. In this case both, the time-triggered and the event-triggered system, sent updates periodically. We choose $\alpha = \lambda w$ to ensure the same network utilization resulting in identical delay and AoI bounds.

The DoI bounds differ slightly since update messages are synchronized with the occurrence of sensor events in the event-triggered system but not in the time-triggered system. This is reflected by the residual inter-event time $J(t)$ in Th.~\ref{th:doibounds}. Since deterministic inter-event times are not memoryless, we estimate $\mathsf{M}_{J(t)}(-\theta) < 1$ for $\theta > 0$ by $1$ and obtain with Th.~\ref{th:doibounds} for the time-triggered system that
\begin{equation*}
\Phi_{\varepsilon} = \frac{\ln \varepsilon - \ln \mathsf{M}_{\Delta}(\theta)}{\ln \mathsf{M}_I(-\theta)}  = \frac{\lambda (\ln \mathsf{M}_{\Delta}(\theta) - \ln \varepsilon)}{\theta} = \lambda \Delta_{\varepsilon},
\end{equation*}
where we inserted the MGF $\mathsf{M}_I(-\theta) = e^{-\theta/\lambda}$ of the deterministic inter-event time $1/\lambda$ and ignored integer constraints. In the final step, we substituted the AoI bound $\Delta_{\varepsilon}$~\eqref{eq:statisticalaoibound}. This implies that the update rate that achieves the minimal AoI also minimizes the DoI in this case. As can be observed in Fig.~\ref{fig:delayaoidoidetsensor}, the minimal AoI bound $\Delta_{\varepsilon} = 88$ and the minimal DoI bound $\Phi_{\varepsilon} = 44$, corresponding to $\lambda = 0.5$, are achieved for the same network utilization of about 0.3.
\paragraph{Exponential events, exponential service}
The direct correspondence of AoI and DoI $\Phi_{\varepsilon} = \lambda \Delta_{\varepsilon}$ observed in Fig.~\ref{fig:delayaoidoidetsensor} is, however, not given in case of a random sensor signal. In Fig.~\ref{fig:doiexpchannel} we show results for exponential instead of deterministic inter-event times. All other parameters are unchanged. The same set of parameters has also been used for Fig~\ref{fig:delayaoilambda05}.

For the time-triggered system, that is signal-agnostic, the AoI is generally unaffected by the choice of the sensor model. Consequently, the AoI in Fig.~\ref{fig:doiexpchannel} is identical to Fig.~\ref{fig:delayaoidoidetsensor}. The DoI increases, however, since a varying number of sensor events may occur during any update interval.

In case of the event-triggered system, the AoI in Fig.~\ref{fig:doiexpchannel} is larger than in Fig.~\ref{fig:delayaoidoidetsensor} since the arrivals to the network are now a random process. Due to the randomness, the AoI of the event-triggered system is generally larger than the AoI of the time-triggered system, as also observed in Fig.~\ref{fig:delayaoilambda}.

Regarding the DoI, the event-triggered system has the advantage that it is signal-aware and sends update messages only if needed. Interestingly, both systems, time-triggered and event-triggered, show comparable minimal DoI. For an intuitive explanation consider a burst of sensor events. In this case, the event-triggered system samples the sensor more frequently with the goal to improve the DoI. The increased rate of update messages may, however, cause network congestion and queueing delays that are detrimental to the DoI and outweigh their advantage. Overall this appears to cause similar minimal DoI, however, at a lower average network utilization for the event-triggered system. Concluding, the u-shaped DoI curves in Fig.~\ref{fig:doiexpchannel} show that both systems are feasible and robust to variations of the network utilization. Configured optimally, the event-triggered system uses less network resources. It generates, however, more bursty network traffic.

A related finding in~\cite{champati:ageofinformationfeedbackcontrol, klugel:aoipenalty} is that the problem of minimizing the mean-square norm of the state error at the monitor is equivalent to a signal-agnostic AoI minimization problem. In case of our event-triggered and hence signal-aware system, Fig.~\ref{fig:doiexpchannel} does not confirm a similar result. Here, the network utilization that achieves the minimal tail bounds is different for the AoI and DoI, respectively.
\paragraph{Exponential events, deterministic service}
Fig.~\ref{fig:doidetchannel} shows results for the same system as in Fig.~\ref{fig:doiexpchannel} but with deterministic service times $1/\mu = 4$ as also used in Fig.~\ref{fig:delayaoilambda05detchannel}. In this case the time-triggered system is purely deterministic and achieves a very small AoI that is determined as the sum of the network service time and the width of the update interval. Hence, the AoI is minimal in case of full network utilization. The same applies for the DoI bound.

The event-triggered system shows a much larger AoI that is due to the randomness of the update messages. For low utilization, corresponding to a large threshold $\alpha$, the AoI is large due to infrequent updates if the sensor signal does not change much. In case of high utilization, small $\alpha$, queueing delays start to dominate and the AoI bends sharply upwards.

Despite the large AoI, the event-triggered system achieves a similarly good minimal DoI bound as the time-triggered system. Specifically at low utilization, the DoI bound of the event-triggered system is much smaller. This is a consequence of the deterministic network service, where the delivery of an update message within one message service time $1/\mu = 4$ is almost guaranteed, given the utilization is low and queueing delays are avoided. This is particularly favorable for the event-triggered system since once the sensor signal changes by more than the threshold $\alpha$, an update message can be delivered with high probability within short time.
\paragraph{Decay of tail probabilities}
In Fig.~\ref{fig:doiexpchannel} the minimal DoI bound of the time-triggered system is achieved for $w \approx 13$ and of the event-triggered system for $\alpha = 8$, corresponding to utilizations of $0.325$ and $0.25$, respectively. We investigate these parameters in more detail in Fig.~\ref{fig:doisim} where we show AoI and DoI bounds as well as empirical quantiles from $10^8$ samples of the AoI and DoI obtained by simulation. While the minimal DoI bounds of the time-triggered and event-triggered systems in Fig.~\ref{fig:doiexpchannel} are about the same for $\varepsilon=10^{-6}$, we see in Fig.~\ref{fig:aoidoilinearsim05} that the DoI quantiles differ if $\varepsilon$ is not small. Particularly, the DoI approaches $\alpha$ for $\varepsilon \rightarrow 1$ in case of the event-triggered system and zero in case of the time-triggered system. Conversely, the AoI approaches $w$ for $\varepsilon \rightarrow 1$ in case of the time-triggered system and zero in case of the event-triggered system. We do not display tail bounds for the range of $\varepsilon$ in Fig.~\ref{fig:aoidoilinearsim05}. We include the bounds in Fig.~\ref{fig:aoidoittsim05} and Fig.~\ref{fig:aoidoietsim05} where we show the tail decay. It can be noticed that the DoI bounds and the empirical DoI quantiles of the time-triggered and the event-triggered system exhibit the same speed of tail decay. This dominates the DoI if $\varepsilon$ is small causing similar DoI performance for both systems.

In Fig.~\ref{fig:doisimnonopt} we include simulation results for non-optimal parameters $w$ and $\alpha$. For smaller $w$ and $\alpha$ we see an improvement of the AoI and DoI if $\varepsilon$ is not small. This is due to more frequent update messages. At the same time this causes increased network utilization and a smaller speed of tail decay. This consumes the initial advantage when $\varepsilon$ becomes small and leads to worse tail performance. In case of larger than optimal $w$ and $\alpha$, update messages are sent less frequently so that the AoI and DoI increase. This also brings about a reduction of the network utilization that can, however, only achieve a small improvement of the speed of the tail decay which is not relevant for $\varepsilon = 10^{-6}$.
\begin{figure}
\subfigure[Time-triggered system]{
\includegraphics[width=0.46\linewidth]{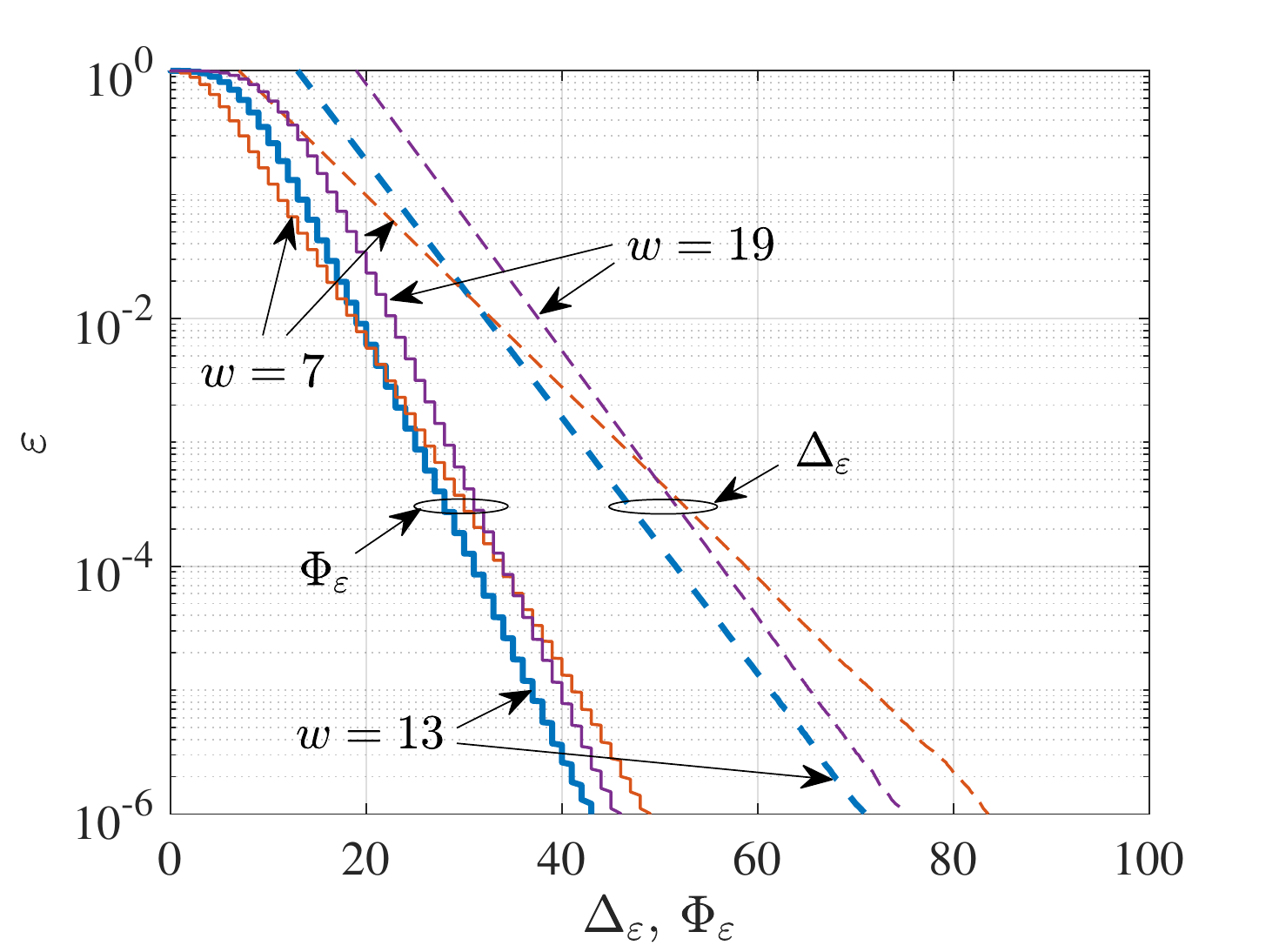}
\label{fig:aoidoittsim05w_7_13_19}
}
\subfigure[Event-triggered system]{
\includegraphics[width=0.46\linewidth]{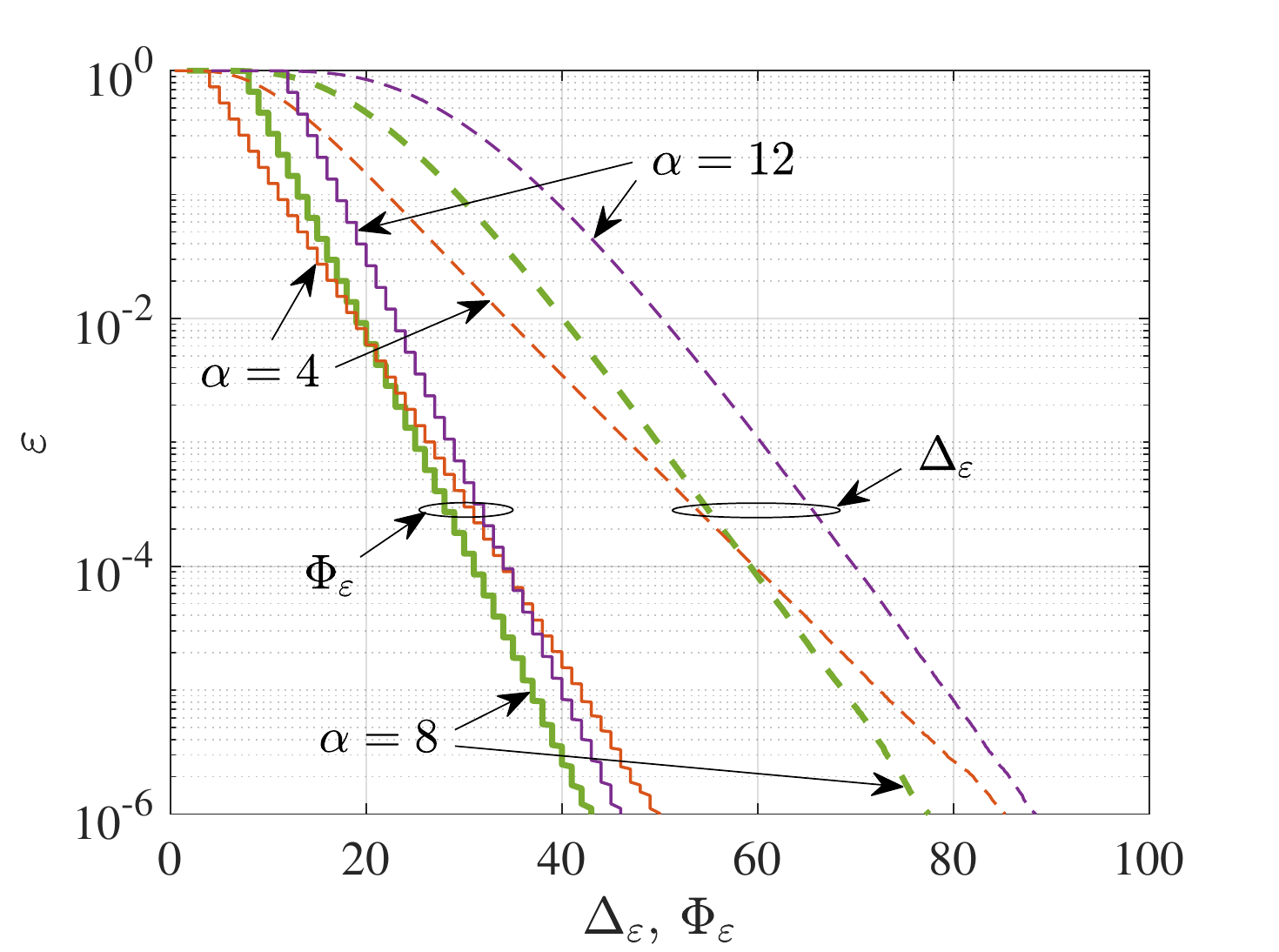}
\label{fig:aoidoietsimalpha_4_8_12}
}
\caption{Empirical AoI and DoI distribution for the system in Fig.~\ref{fig:doiexpchannel}. Parameters $w=13$ and $\alpha=8$ minimize the $\varepsilon=10^{-6}$ DoI bound. It can be seen how smaller or larger parameters are sub-optimal for $\varepsilon=10^{-6}$.}
\label{fig:doisimnonopt}
\end{figure}
%
%
\section{Conclusions}
\label{sec:conclusion}
We considered remote monitoring of a sensor via a network. The sampling policy of the sensor is either time-triggered or event-triggered. Correspondingly, sampling is either signal-agnostic or signal-aware. We derived tail bounds of the delay and peak age-of-information that show advantages of the time-triggered system. These metrics do, however, not take the estimation error at the monitor into account, motivating a complementary definition of deviation-of-information. Despite inferior age-, we find that the event-triggered system achieves similar deviation-of-information as the time-triggered system. Sending update messages only in case of certain sensor events, the event-triggered system operates optimally at a lower network utilization and saves network resources.
%
%
\bibliographystyle{IEEEtran}
\bibliography{IEEEabrv,IEEEfidler}
%
%
\end{document}